\newtheorem{theorem}{Theorem}
\newtheorem{corollary}{Corollary}
\newtheorem{lemma}{Lemma}
\theoremstyle{definition}
\numberwithin{equation}{section}
\begin{document}

\title{Kelly Betting with Quantum Payoff: a continuous variable approach}
\author[1]{Salvatore Tirone}
\author[1]{Maddalena Ghio}
\author[1]{Giulia Livieri}
\author[2]{Vittorio Giovannetti}
\author[1]{Stefano Marmi}
\affil[1]{Scuola Normale Superiore, I-56126 Pisa, Italy}
\affil[2]{NEST, Scuola Normale Superiore and Istituto Nanoscienze-CNR, I-56126 Pisa, Italy}
\date{}
\maketitle
\begin{abstract}
The main purpose of this study is to introduce a semi-classical model describing betting scenarios in which, at variance with conventional 
approaches, the payoff of the gambler is encoded into the internal degrees of freedom of a quantum memory element. In our scheme, we assume that the invested capital is explicitly associated with the quantum analog of the free-energy (i.e. ergotropy functional by
Allahverdyan, Balian, and Nieuwenhuizen) of a single mode of the electromagnetic radiation which, depending on the outcome of the betting, experiences attenuation or amplification processes which model losses and winning events. The resulting stochastic evolution of the quantum memory  resembles
the dynamics of random lasing which we characterize within the theoretical setting of Bosonic Gaussian channels. As in the classical Kelly Criterion for optimal betting, we define the  asymptotic doubling rate of the model and identify the optimal gambling strategy for fixed odds and probabilities of winning. The performance of the model are hence studied as a function of the input capital state under the assumption that the latter belongs to the set
of Gaussian density matrices (i.e. displaced, squeezed thermal Gibbs states) revealing that the best option for the gambler is
to devote all their initial resources into coherent state amplitude. 
\end{abstract}

\section{Introduction}\label{sec:Intro}
\noindent The problem of assessing the maximum growth of an optimal investment, or equivalently of the maximization of the long-run interest rate, is one of the central issues in quantitative finance. An especially simple but fundamental example is provided by horse-race markets, i.e. markets with the property that at every time step one of the assets pays off and all the other assets pay nothing (the wealth invested is lost completely). Remarkably, horse-race markets are very special cases of general markets, being, in a sense, the extremal points of the distribution of asset returns \citep{bell1988game, iyengar2000growth}. As such they have been extensively investigated:  they possess many of the usual attributes of financial markets but they also have the important additional property that each bet has a well defined end point at which its value becomes certain. This is rarely the case in finance, where asset values depend on an intrinsically uncertain future \cite{williams2005information}. In the 1950s, J. L. Kelly, Jr. worked out a striking interpretation of the rate of transmission of information over a noisy communication channel in the setting of gambling theory~\cite{kelly}. 
In this context, the input symbols to the communication line were interpreted as the outcomes of an uncertain event on which betting is possible.
Exploiting this construction, under ``fair" odds and  independent and identically distributed (\textit{i.i.d.}) horse-race markets assumptions (with no transactions costs), Kelly proved the optimality of 
proportional betting strategies; specifically, he proved that the asymptotic growth rate of the cumulative wealth is maximal when the fraction of capital bet on each horse is proportional to its true winning probability. Remarkably, over the years, both theory and practice of the Kelly criterion in gambling and investment have developed prolifically \cite{maclean2011kelly}.\\
\indent In this paper, we present a theory of classical betting with a quantum payoff which investigates the behaviour of an economic agent whose  initial capital is represented  by the value of a quantum resource encoded into the state of a quantum memory that faces uncertainty under action of a stochastic environment.
In order to achieve this, we rely on the versatility of  Bosonic Gaussian channels (BGCs) to mimic the action of the random gains and losses that affect the capital  in a horse-race market.\\
\indent BGCs play a fundamental role in quantum information theory
\cite{HOLEVOBOOK, WILDEBOOK} 
where they act as the proper counterparts of Gaussian channels of classical information theory~
(see, for instance \cite{holevowerner2001, caruso2006one,cerf2014}and references therein). Formally speaking, they can be identified
with the set of quantum evolutions which preserve the ``Gaussian character" of the transmitted signals and possess the striking feature, heavily used in this paper, to  be closed under composition \cite{serafini2017qcv}. At the physical level  BGCs describe the most relevant  transformations an optical pulse  may experience  when 
propagating in a  noisy environment. In particular, they provide a proper
representation of  attenuation and  amplification processes for continuous variable quantum systems \cite{holevowerner2001, cerf2014}.  Differently from their classical counterparts, which can be always realized without extra added noise effects,  
BGCs include vacuum fluctuation and display a non-commutative character
which has an intrinsic quantum mechanical origin. As a result, the setting we introduce here is inherently different from previously analyzed horse-market models \cite{williams2005information, hausch2008efficiency}, giving rise to a non-trivial generalization of classical betting procedures.\\
\indent The connection between game theory and quantum information has a long history dating back to the pioneering results presented in
Refs.~\cite{meyer1999quantum,goldenberg1999quantum,eisert1999quantum}.
The common denominator of these works is to grant players access to quantum resources
like entanglement or mere quantum coherence, 
that help them in defining new strategies, thus enlarging the range of  possible operations.
Our approach on the contrary relies on quantizing not the protocol itself but only the resource that is used to encode both the invested capital and the payoff of the game. Specifically,
in our model we identify the  capital  of our economic agent with the  quantum ergotropy functional ${\cal E}(\hat{\rho})$ \cite{alicki2013qb} associated to the density matrix  $\hat{\rho}$ of a single Bosonic mode~$A$ that is under the control of the 
agent. Ergotropy  plays an important role in many different contexts, spanning from the characterization of optimal thermodynamical cycles~\cite{REV1,REV2}
to energetic instabilities~\cite{PUSZ,LENARD}, and to the study of quantum batteries models \cite{alicki2013qb, andolina2019, farina2019}: it
represents the maximum mean energy decrement obtainable when forcing the quantum system $A$ to undergo unitary evolutions induced by  cyclic  external modulations of its Hamiltonian. The use of the ergotropy as a  quantifier of the wealth  of the agent is legitimated by the fact that  in the Kelvin-Planck formulation of the second law of thermodynamics, ${\cal E}(\hat{\rho})$ 
can be interpreted as the maximum amount of useful energy (work) which we can extract form the state $\hat{\rho}$ 
via reversible coherent (i.e. unitary) transformations, i.e, in the absence of extra resources and without collateral dissipation events~\cite{Niedenzu2019conceptsofworkin}.
This choice  follows also the long history of analogy making between economics (and finance) and (stastical) 
thermodynamics discussed for instance in \cite{smith2008classical, kim2015distribution, saslowtherm}, and references therein. It is worth observing also that our representation of the invested capital reminds somehow the proposal of Orrell which used the energy (not the ergotropy) of an harmonic oscillator to describe the price of  transactions occurring during a trading cycle \cite{ORRELL2020122928, Orell2}.

Regarding the uncertainty of the outcome of the  horse race,  we describe it with the random selection of one among a collection of  a finite set of one-mode BGCs  operating on~$A$. Their action  includes both an attenuation part -- associated with the partition of the capital determined by the strategy adopted by the gambler -- and an amplification part --associated instead with the odds offered by the bookmaker. Each channel is therefore randomly selected and applied to $A$ according to  the winning probability $p_j$ of the corresponding horse. The whole procedure is iterated to reproduce the \textit{i.i.d.} horse-race markets framework, resulting into a stochastic
trajectory for $A$ in which the state of the system at the discrete time step $t$, is obtained by a concatenation of $t$ elements  of the selected BGCs set.
We notice that  from a practical standpoint our construction is a simple idealization of a general random lasing process~\cite{Wiersma2008}, so the formalism we introduce here could also serve as a modelization
of the thermodynamic aspects of light amplification in disordered materials.
The resulting payoff for the gambler is therefore computed by evaluating the ergotropy functional
on the final state of the problem. The aim of our work is characterizing the  probability distribution of the resulting accumulated wealth and determining which, among all possible choices of the input state $\hat{\rho}_{0}$ of $A$, provides the best performance in terms of capital-gain for a given game strategy. It is worth stressing that  at the level of the mere internal energy of the system $A$, irrespective of the choice of the initial state $\hat{\rho}_{0}$, the process~(\ref{rho0torhot}) behaves essentially as its classical horse-market counterpart, a part from a noisy contribution, where the BGCs are replaced by standard betting operations. On the contrary, at the level of ergotropy, the non-commutative nature of the involved quantum operation, combined with the classical randomness of the bet, results in the injection of extra noise terms which  affect negatively the performance of the procedure, paving the way for a nontrivial optimization of the initial resource $\hat{\rho}_{0}$. In our analysis we shall address this problem by explicitly restricting the class of allowed inputs to the special (yet not trivial) subset of Gaussian density matrices \cite{serafini2017qcv}. This choice  on one hand allows for some mathematical  simplification while on the other hand 
is physically motivated by the fact that this special collection of  states includes a rather large class of  electromagnetic configurations (notably thermal, coherent, and squeezed signals) 
that are  very well under experimental control.\\
\indent The rest of the paper is organized as follows: we start in Sec.~\ref{sec:Preliminaries} with a brief review on BGCs.  The Kelly model with quantum payoff is hence introduced in Sec.~\ref{sec:PlayingField}. In Sec.~\ref{sec:KellyQuantumClassical} we study its statistical properties and present 
some numerical analysis. Conclusions are presented in Sec.~\ref{sec:Conclusions}.
The manuscript also contains an Appendix devoted to elucidate some technical aspects of the problem.

\section{Preliminaries}\label{sec:Preliminaries}

\subsection{Gaussian states}\label{sec:BosonicGaussianStates}

Let $\mathcal{H}$ be a complex separable Hilbert space  of infinite dimension associated with one mode  $A$ of the electromagnetic radiation (or equivalently a 1-dimensional, quantum harmonic oscillator) of frequency $\nu$. The model is fully described by the assignment of the canonical coordinate vector   $\hat{r} := (\hat{q},\hat{p})^T$ and  the quadratic Hamiltonian 
\begin{equation}
\hat{H} := h\nu\;  \frac{\hat{r}^T \cdot \hat{r}}{2} = h\nu \left( \frac{\hat{p}^2 + \hat{q}^2}{2}
\right)\;,
\label{eq:electroHamiltonian}
\end{equation}
expressed in terms of (rescaled) position and momentum operators $\hat{q}$ and $\hat{p}$ which satisfy the (Bosonic) canonical commutation relation: $[\hat{q},\hat{p}]=i$ ($h$ being the Planck constant).
We let $\mathcal{D}(\mathcal{H})$ be the set of the positive semidefinite operators with trace $1$, and we call $\hat{\rho}$ a quantum state of $A$ if $\hat{\rho} \in \mathcal{D}(\mathcal{H})$. For each element of such set we can then associate a column vector of~$\mathbb{R}^2$, 
\begin{equation}\label{DefM} 
m := \text{Tr}[\hat{\rho}\; \hat{r}]\;,
\end{equation}
given by the expectation values of the canonical coordinates, and 
a $2\times 2$ covariance matrix $\sigma$ whose elements are
\begin{equation}\label{sigma} 
\sigma_{ij} := \text{Tr}[\hat{\rho}\{(\hat{r}_i-m_i),(\hat{r}_j-m_j)\}],
\end{equation}
with $\{\,,\}$ denoting the anti-commutator operation, which describes instead the second order moments of the coordinate distributions on $\hat{\rho}$. Although no constraints hold on $m$, on $\sigma$ quantum uncertainty relations impose  the inequality $\sigma \geq\sigma_2$, with $\sigma_2 := \tiny{
\left[ \begin{array}{cc} 0 & -i \\ i &0\end{array}\right]}$ being the second Pauli matrix. 
An exhaustive description of $\mathcal{D}(\mathcal{H})$ is finally provided by the
\textit{characteristic function} formalism which, given a density matrix $\hat{\rho}$ allows us to faithfully represents it in terms of the following complex functional 
\begin{equation}\label{eq:charfunc}
\phi(\hat{\rho};z) := \text{Tr}\left[\hat{\rho} \hat{V}(z)\right]\;, 
\end{equation}
where  $z:=(x,y)^T$ is a  column vector of $\mathbb{R}^2$, and
$\hat{V}(z):= \exp  [i \hat{r}^T\cdot z ]$ is the Weyl operator of the system. 
In this context, the density matrix $\hat{\rho}\in  \mathcal{D}(\mathcal{H})$ is thus said to be Gaussian if its associated characteristic function has a  Gaussian form, i.e. namely if we have 
\begin{equation} \label{GAUSSCHAR} 
\phi_G(\hat{\rho};z) = \exp[{-\tfrac{1}{4}z^T\sigma z + im^T\cdot z}]\;, 
\end{equation}
with $m$ and $\sigma$ defined as in Eqs.~(\ref{DefM}) and (\ref{sigma}) respectively~\cite{serafini2017qcv}. 
At the physical level Gaussian states represent thermal Gibbs density matrices of generic Hamiltonians which are (non-negative) quadratic forms of the canonical vector $\hat{r}$. A convenient parametrization is given by the following expression 
\begin{equation} \label{PARGAUS} 
\hat{\rho} = \hat{V}(m)\hat{S}({\zeta})\hat{\rho}_{\beta}\hat{S}^{\dagger}(\zeta)\hat{V}^{\dagger}(m),
\end{equation}
where for $\beta\geq 0$, $\hat{\rho}_{\beta} := \exp[ - \beta \hat{H}] /Z_\beta$  defines a thermal state of the mode Hamiltonian~(\ref{eq:electroHamiltonian}) with inverse temperature $\beta$ ($Z_\beta=\mbox{Tr}[ \exp[ - \beta \hat{H}]]$ being the associated partition function) and where
for $\zeta:= |\zeta| e^{i \varphi}$ complex we introduce the squeezing operator
\begin{eqnarray}
\hat{S}({\zeta}):= \exp\left[i \tfrac{|\zeta|}{2} \hat{r}^T R(\varphi) \hat{r}\right]\;,
\end{eqnarray} 
with $R(\varphi):= \cos(\varphi) \sigma_1 - \sin(\varphi)  \sigma_3$
being a rotation matrix defined in terms of the
first and third Pauli matrices $\sigma_1:= \tiny{
\left[ \begin{array}{cc} 0 & 1 \\ 1 &0\end{array}
\right]}$ and $\sigma_3:= \tiny{
\left[ \begin{array}{cc} 1 & 0 \\ 0 &-1\end{array}
\right]}$. 
With this choice we get that the associated characteristic function of~(\ref{PARGAUS}) is
in the form (\ref{GAUSSCHAR}) with 
\begin{eqnarray} 
\sigma &=&  (2 {{{n}}} +1) [\cosh(2|\zeta|)  \mathbb{I}\nonumber \\
&&+ \sinh(2|\zeta|)(\sin(\varphi) \sigma_1 - \cos(\varphi) \sigma_3)]\;, 
\end{eqnarray} 
where  $\mathbb{I}$  is the $2\times 2$ identity 
matrix and ${{{n}}} := 1/(e^{\beta h \nu} -1)$ is the average photon number of $\hat{\rho}_{\beta}$. Special cases of Gaussian states are the coherent states which often are referred to as semi-classical configurations of the electro-magnetic field obtained
by setting ${{{n}}}=\zeta=0$ in (\ref{PARGAUS}). 

\subsection{One-mode Bosonic Gaussian channels}\label{sec:BosonicGaussianChannels}

A BGC is a Completely Positive, Trace Preserving (CPTP) dynamical  transformation $\Phi$
\cite{HOLEVOBOOK,WILDEBOOK} defined by assigning a column  vector $\lambda$ of $\mathbb{R}^2$, and two  positive  real matrices $X$ and $Y$ fulfilling the constraint 
\begin{eqnarray} 2 Y + X^T \sigma_2 X \geq \sigma_2 \;.  \label{CPTP} 
\end{eqnarray} 
When operating on  a generic $\hat{\rho} \in \mathcal{D}(\mathcal{H})$,  ${\Phi}$ sends it into a new density matrix ${\Phi}(\hat{\rho})$ whose characteristic function is expressed by the formula
\begin{equation}\label{eq:carachteristic}
\phi({\Phi}(\hat{\rho});z) = \phi(\hat{\rho};Xz)\exp\left[-\tfrac{1}{2} z^T Y z  + i\lambda^T \cdot z\right]\;.
\end{equation}
At the level of first and second moments, this mapping  induces the transformation 
\begin{eqnarray}
m&\rightarrow& m' = X^T m + \lambda\;, \nonumber  \\ 
\sigma &\rightarrow&  \sigma'= X^T \sigma X + 2 Y\;, \label{EVOLUZIONESIGMA} 
\end{eqnarray} 
the couple $m$, $\sigma$ being associated with the input state $\hat{\rho}$, and  
$m'$, $\sigma'$ being instead associated with the output state ${\Phi}(\hat{\rho})$.
These transformations inherit their name from the notable fact that 
when applied to  an arbitrary Gaussian input they will produce Gaussian 
outputs. Most importantly for us, these processes are closed under super-operator composition, meaning that
given two {BGC} maps ${\Phi_1}$ and ${\Phi_2}$ identified respectively by the triples $\lambda_1,X_1, Y_1$ and 
$\lambda_2,X_2, Y_2$,
the transformation  ${\Phi_2} \circ
{\Phi_1}$ obtained by concatenating them in sequence
is still a {BGC} channel $\Phi_{3}$  characterized by the new triple
\begin{eqnarray}
\mu_3&:=& X_2^T \lambda_1 + \lambda_2 \nonumber \;,\\
X_3&:=& X_1 X_2 \nonumber \;,\\ 
Y_3&:=& X_2^T Y_1 X_2 + Y_2 
\label{COMPO}  \;.
\end{eqnarray} 
As already mentioned in the introduction, at physical level BGCs describe 
a vast variety of processes. 
In what follows we shall focus on the special subset of BGCs associated with
amplification and attenuation events, possibly accompanied by thermalization effects.
These maps have  $\lambda=0$ and   $X$, $Y$  proportional to $\mathbb{I}$. In view of the constraint~(\ref{CPTP}) and following a rather standard convention, we find it useful to  express them in terms of two non-negative constants $g, N$  as 
$X = g\; \mathbb{I}$, $Y =\alpha\; \mathbb{I}$ with 
\begin{eqnarray}
\alpha := \left\{ \begin{array}{ll} 
|g^2 - 1| \left(N + \frac{1}{2}\right)
& \mbox{for $g\neq 1$} \;, \\
N  & \mbox{for $g=1$}\;, \end{array} \right. \label{defALPHA} 
\end{eqnarray}
and use the notation $\Phi[g;{\alpha}]$ to represent the associated channel. 
Notice that in this case  Eq.~(\ref{EVOLUZIONESIGMA}) gets simplified as
\begin{eqnarray} \label{defm} 
m&\rightarrow& m' = m\,g\;,\\ \label{defsigma} 
\sigma &\rightarrow&  \sigma' = g^2\,\sigma + 2 \alpha \; \mathbb{I}\;,
\end{eqnarray}
which allows us to interpret $g$ as a stretching parameter for the first moments of the state,
and $\alpha$ as an additive noise term.
Furthermore  
from~(\ref{COMPO}) it trivially follows that 
the BGC subset formed by the maps $\Phi[g;\alpha]$  is also closed under channel composition leading to the following identity 
\begin{eqnarray} \label{COMPO1} 
\Phi[{g_2};{\alpha_2}]\circ \Phi[{g_1};{\alpha_1}] =
\Phi[{g_3};{\alpha_3}]
\end{eqnarray} 
with
\begin{eqnarray} 
{g_3:= g_2g_1} \;, \qquad 
{\alpha_3:= g_2^2 \alpha_1 + \alpha_2}\;. 
\end{eqnarray} 
For $g<1$,  $\Phi[g;{\alpha}]$  defines  a process where
the mode $A$ interacts dispersively with an external thermal Bosonic envinronment characterized by a mean photon number $N$ (thermal attenuator channel). 
In particular for $g= \eta \in [0,1[$ and $N=0$ (i.e. $\alpha = (1-\eta^2)/2$) we get \begin{eqnarray}\label{LOSSY} 
\Phi_\eta^{(\text{loss})} :=  \Phi[\eta;{(1-\eta^2)/2}]\;, 
\end{eqnarray} 
which describes a  pure loss of energy from the system, corresponding to the evolution the mode $A$ experiences when passing through a  medium (beam-splitter) of transmissivity $\eta=g$. A special case of particular interest for us is  depicted in Fig.~\ref{fig:BS} where a collection of beam-splitters (semitransparent mirrors~\cite{OPTICS})
are arranged to split an input optical signal $\hat{\rho}$ into a different direction by properly mixing it with the vacuum.  In this construction, while the joint state emerging from the set up may exhibit nontrivial correlations, the local density matrices associated 
to the various output ports are described as $\Phi_\eta^{(\text{loss})}[\hat{\rho}]
$ with $\eta$ being the corresponding effective transmissivity.

\begin{figure}[t!]
	\centering
	\includegraphics[width=\linewidth]{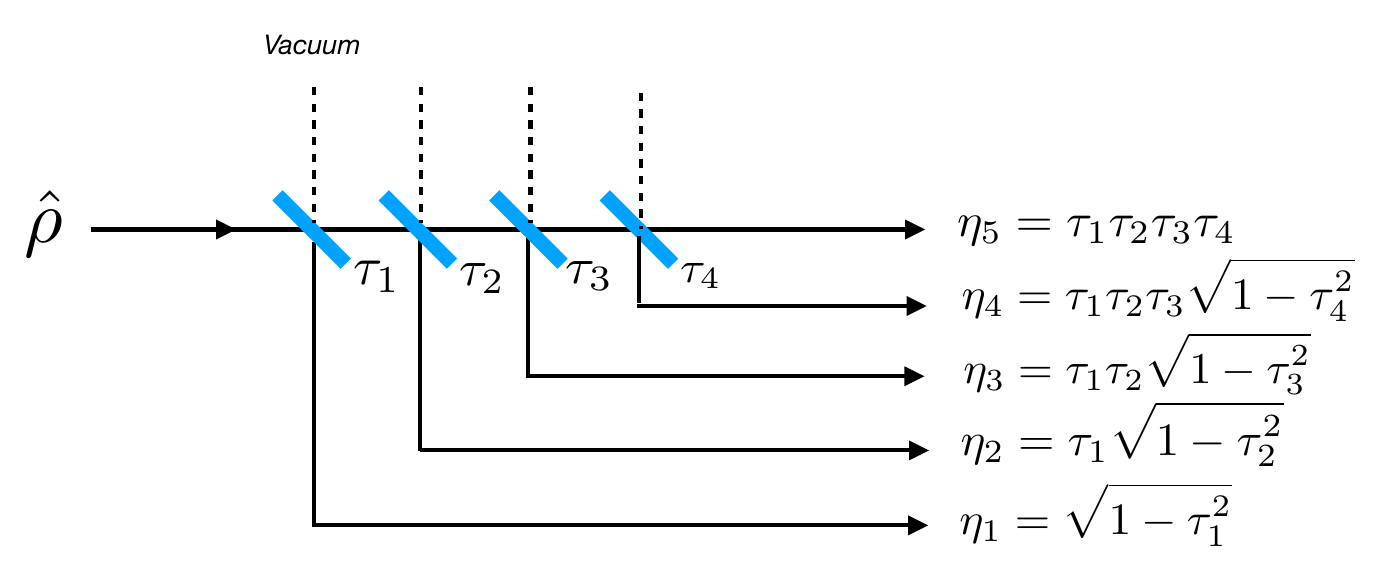}
	\caption{Schematic of an optical splitter formed by a collection of  beam-splitters (blue elements) 
	 that induce a fractioning of the input state
	into a collection of outputs which, locally, are associated with the action of purely lossy channels~(\ref{LOSSY}). In the scheme $\tau_j$ are the transmissivities of the individual beam-splitters, while
	$\eta_j$ are the resulting transmissivities that emerge from their combinations which, by construction,
	fulfil the normalization condition~$\sum_j \eta_j^2=1$. }
	 \label{fig:BS}
\end{figure}

For $g> 1$, $\Phi[g;{\alpha}]$ describes instead the  interaction of $A$ 
and the thermal bath in terms of an amplification event (thermal amplifier).
In this case for $g= k >1$ and $N=0$ (i.e. $\alpha = (k^2-1)/2$) the transformation
\begin{eqnarray}\label{AMPY} 
\Phi_k^{(\text{amp})} := \Phi[k;{(k^2-1)/2}] \; 
\end{eqnarray} 
defines a purely amplifying process induced by the  interaction between $A$ and
the vacuum mediated by  a non-linear optical crystal~\cite{OPTICS}.
Finally, we observe that in the special case where $g=1$, the transformation
\begin{eqnarray}\label{ADD} 
\Phi_1^{(\text{add})} := \Phi[1,{N}] \;, 
\end{eqnarray} 
corresponds to the {\it noise-additive} channel which describes a random Gaussian displacement in the phase space~\cite{holevowerner2001}. 

\subsection{Average energy and ergotropy}\label{sec:ExtractableWork}
The mean energy of a generic state $\hat{\rho}$ of the mode $A$ 
is given by
\begin{equation} \label{DEFENERG} 
E(\hat{\rho}) := \text{Tr}[\hat{\rho}\hat{H}] = \frac{\text{Tr}(\sigma)}{4} + \frac{m^2}{2}\;,
\end{equation}
where $m^2=m^T\cdot m$ is the square norm of $m$, where
in the second identity we made explicit use of Eqs.~(\ref{DefM}) and (\ref{sigma}), and
where without loss of generality we set  $h\nu=1$.
For a Gaussian state, exploiting the parametrization introduced in Eq.~(\ref{PARGAUS}), this can be cast
in a formula that highlights the 
squeezing, displacement, and thermal contributions to the final result, i.e. 
\begin{eqnarray} \label{DEFENERGGAUS} 
E(\hat{\rho}) =\frac{(2{{{n}}}+1)\cosh(2|\zeta|) + {m}^2}{2} \;.\end{eqnarray} 
Equations~(\ref{DEFENERG}) and  (\ref{DEFENERGGAUS}) represent the total energy stored in $\hat{\rho}$ which one could  extract  from the $A$,
e.g. by  putting it in thermal contact with  an external zero-temperature environment.
Yet, as pointed out in Ref.~\cite{allahverdyan2004maximal, ergobrown}, not all such energy will be nicely converted into useful work: part of 
$E(\hat{\rho})$ will indeed necessarily emerge as dissipated heat. The quantity that instead properly quantifies the maximum amount of extractable work is given by the ergotropy. At variance with $E(\hat{\rho})$, the latter is a  nonlinear functional of $\hat{\rho}$ that we can compute as 
\begin{equation} \label{ERGOTROPY} 
\mathcal{E}(\hat{\rho}) :=  E(\hat{\rho}) - \min_{\hat{U}}E(\hat{U}\,\hat{\rho}\,\hat{U}^{\dagger}),
\end{equation}
where the minimization is performed over all unitary transformations $\hat{U}$. For a one-mode Bosonic Gaussian state a closed expression in terms of $\sigma$ and $m$ is known which we shall
use in the following, i.e. \cite{farina2019}
\begin{eqnarray}
\mathcal{E}(\hat{\rho}) &=& \frac{\text{Tr}(\sigma)}{4} + \frac{{m}^2}{2} - \frac{\sqrt{\text{Det}(\sigma)}}{2} \nonumber\\
&=& \frac{(2{{{n}}}+1)(\cosh(2|\zeta|)-1) + {m}^2}{2}\;. \label{ERGOG} 
\end{eqnarray}
This equation shows that 
the gap between $E(\hat{\rho})$ and $\mathcal{E}(\hat{\rho})$
only depends upon the thermal component of $\hat{\rho}$, i.e.
\begin{eqnarray} \label{DIFFE} 
E(\hat{\rho})-\mathcal{E}(\hat{\rho})=  {{{n}}} +1/2\;,
\end{eqnarray}
 reaching its minimum value  for pure (squeezed  
coherent) states (i.e. ${{{n}}}=0$) where it matches the vacuum energy level $1/2$.
Notice also that states $\hat{\rho}$ which are characterized by the same ergotropy value  form a continuous 
2D manifold in the phase space  associated with the variables
$m$, $\zeta$, and ${{{n}}}$ -- see Fig.~\ref{fig:ergo}.

\begin{figure}[t!]
	\centering
	\includegraphics[width=\linewidth]{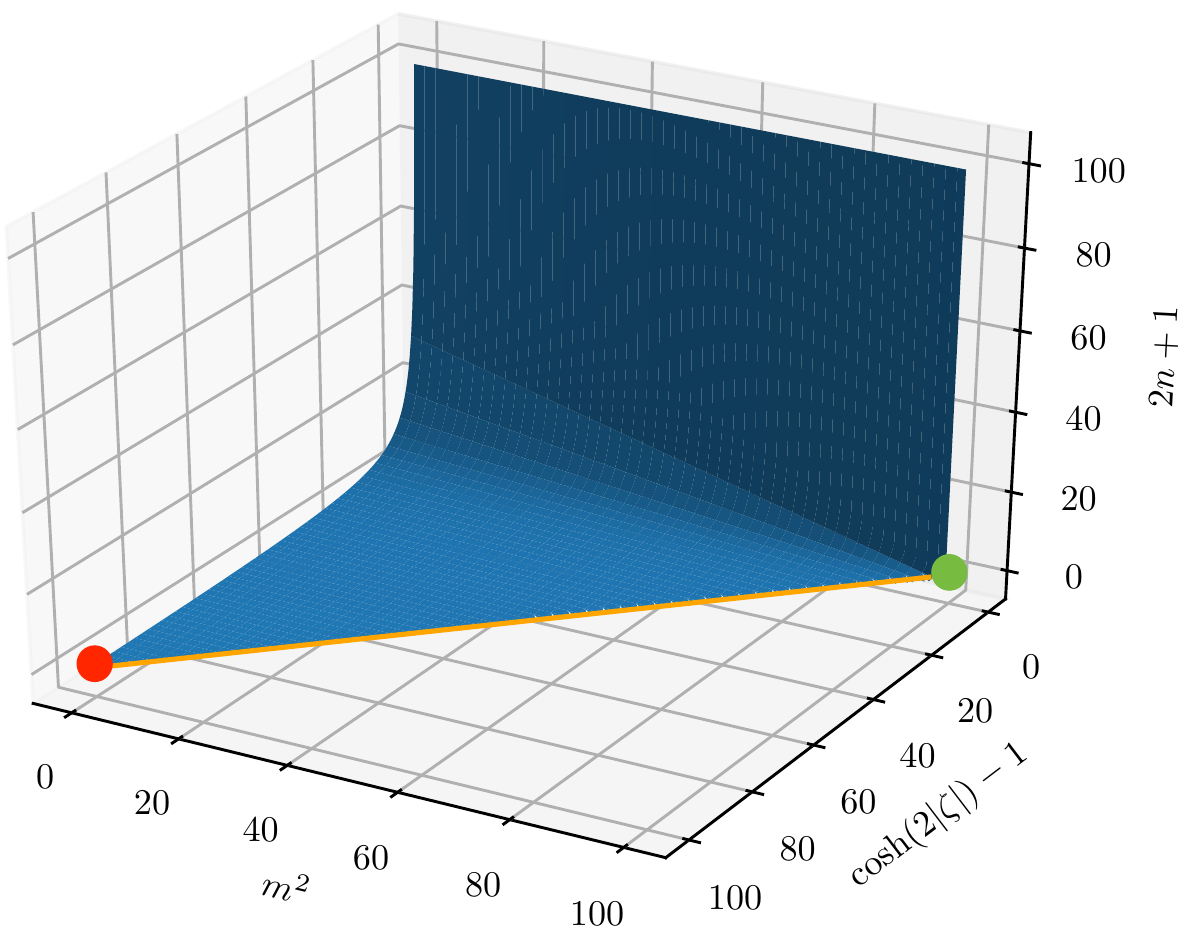}
	\caption{Geometric representation of the iso-ergotropic 2D manifold associated with Gaussian states
	~(\ref{PARGAUS}) with  ${\cal E}(\hat{\rho})=50$ in the phase space defined by the parameters $x:=m^2$, $y:=(\cosh(2|\zeta|)-1)$,
	$z:= 2{{{n}}}+1$. The green dot represents the pure coherent element of the set
	($y=0$, $z=1$), while the red dot the pure squeezed element ($x=0$, $z=1$). The orange line at the bottom represents
	the pure states of the manifold.}
	 \label{fig:ergo}
\end{figure}

\section{The playing field}\label{sec:PlayingField}

Motivated by the Kelly Criterion for optimal betting and the versatility of the one-mode BGCs, we present here a theory of classical betting with a quantum payoff. 
First, in Subsection~\ref{sec:KellyCriterion} we briefly review the conventional Kelly setting. Then in 
Subsection \ref{sec:theProtocol}, we introduce the repeated-quantum-betting scheme as the iteration of single quantum bets, each bet being a probabilistic transmission of the quantum state of a single Bosonic mode through a one-mode BGC. We hence embed the  protocol into a rigorous mathematical framework. Not only this allows for a more compact representation of the procedure over time, but also it enables to see the iterative scheme from the higher perspective of random dynamical systems: the action of a family of quantum transformations, corresponding to one-mode BGCs in the procedure, over a phase space of (single Bosonic mode) quantum states. Finally in Subsection \ref{sec:KellyQuantumClassical} we construct a parallel between the quantum and the classical Kelly criterion.

\subsection{The Kelly Criterion for optimal betting}\label{sec:KellyCriterion}

Imagine that one has $J$ horses competing in a race, each
characterized by a winning probability $\{ p_j\}_{j=1,\ldots,J}$,
$\sum_{j = 1}^{J} p_j= 1$.
Formally speaking the $p_j$'s define a distribution   
of a random variable $Z$ taking values in the alphabet $\left\{1, \ldots, J\right\}$ with probability $\mathcal{P}(Z = j) = p_j$ (symbols $j$ of the alphabet representing the horses whereas the  variable $Z$ the winning horse).
If a gambler invests one dollar on horse $j$ then they receive $o_{j}$ dollars if horse $j$ wins and zero dollars if horse $j$ loses: the game is then said to be under {\it unfair odds} condition if $\sum_{j = 1}^{J} \frac{1}{o_{j}}>1$; on the contrary, if $\sum_{j = 1}^{J} \frac{1}{o_{j}}=1$ or $\sum_{j = 1}^{J} \frac{1}{o_{j}}<1$ we say that the odds are 
{\it fair } or  {\it super-fair} respectively. In what follows we shall always focus on these last two cases for which one can show that the best gaming option for the gambler is always to distribute all of their wealth across the horses (constant allocation strategy). Accordingly we call $b_{j}$ the  fraction of the gambler's wealth invested in   horse $j$, $b_{j} \geq 0$ and $\sum_{j = 1}^{J} b_j = 1$. We consider henceforth a sequence of $t$ repeated gambles on this race under 
{\it i.i.d.} assumptions, i.e. requiring that neither the probabilities $p_j$s,
 the odds $o_j$s, and  the fractions $b_j$s vary from betting stage to betting stage.   
The wealth of the gambler during such a sequence is a random variable defined by the expression 
\begin{eqnarray} \bar{S}_{t} \label{DEFST}
:= S_{j_t} \cdots  S_{j_2} S_{j_1} \;, 
\end{eqnarray}
where for  $\ell\in \{ 1,\cdots, t\}$, $j_\ell\in \left\{1, \ldots, J\right\}$ represents the horse that has won the
$\ell$-th race delivering a payoff
$S_{j_\ell} = o_{j_\ell}  b_{j_\ell}$, the corresponding joint 
probability being 
\begin{eqnarray}  \label{PROBev}  
\bar{p}_{t}:= p_{j_t}\cdots
p_{j_2} p_{j_1}\;,\end{eqnarray} 
(to help readability hereafter we adopt the compact
convention  of dropping explicit reference of the functional dependence
upon the indexes ${j_t}$,  $\cdots$, ${j_2}$, ${j_1}$ of  quantities evaluated along a given stochastic trajectory). In the asymptotic limit of large $t$,  the value of $\bar{S}_t$ can be directly linked to the \textit{doubling rate} of the horse race, i.e. the quantity 
\begin{eqnarray} \label{DOUBLING} 
W(\mathbf{b}, \mathbf{o}, \mathbf{p}):= 
\sum_{j = 1}^{J}  p_j {\log_2}(o_j b_j)  \;,
\end{eqnarray}  
where $\mathbf{p}:= (p_1,p_2,\cdots, p_J)$, $\mathbf{o}:= (o_1,o_2,\cdots, o_J)$ and $\mathbf{b}:= (b_1,b_2,\cdots, b_J)$ are the
three vectors which, under {\it i.i.d.} assumptions fully define the model: the 
first assigning the stochastic character of the race, 
the second the selected odds of the broker, and the last one
the strategy the gambler adopts in placing his bets. More specifically, the connection between $\bar{S}_t$ and 
$W(\mathbf{b}, \mathbf{o}, \mathbf{p})$ is provided by the strong Law of Large Numbers
(LLN) which in the present case establishes that for sufficiently large $t$  we have 
$
\bar{S}_{t} \approx  2^{t\,W(\mathbf{b}, \mathbf{o}, \mathbf{p})}$ almost surely, or more precisely that 
\begin{equation}\label{eq:slln}
{\Pr}\left[\lim_{t\rightarrow\infty}\frac{{\log_2} \bar{S}_{t}
}{t}=W(\mathbf{b}, \mathbf{o}, 
\mathbf{p})\right]=1\;,
\end{equation}
see for instance \cite{cover2012elements}, Theorem 6.6.1.
Exploiting this result, one can now identify the optimal gambling strategy that ensures the largest asymptotic wealth increase, by selecting  $\mathbf{b}$ which allows $W(\mathbf{b}, \mathbf{o},\mathbf{p})$ to reach its maximum value $W^{*}(\mathbf{o},\mathbf{p})$ for fixed $\mathbf{o}$ and $\mathbf{p}$. Such  maximum is achieved by the proportional gambling scheme where $\mathbf{b}= \mathbf{p}$ (Kelly criterion)  so that 
\begin{equation} W^{*}(\mathbf{o},\mathbf{p}) = \sum_{j = 1}^{J} p_j {\log_2} (o_jp_j) =
\sum_{j = 1}^{J} p_j {\log_2} o_j -  \mathcal{H}(\mathbf{p})\;, 
\end{equation}  with $\mathcal{H}(\mathbf{p}):=\sum_{j=1}^J p_j \log_2 p_j$ the Shannon entropy of the Bernoulli distribution defined by the vector $\mathbf{p}$ (see \cite{cover2012elements}, Theorem 6.1.2).

\subsection{BGC betting model}\label{sec:theProtocol}

As in the previous section we consider a horse-race model
where at each step of the gambling scheme $J$  horses
compete, their probability of winning being defined by  a Bernoulli distribution $\{ p_j\}_{j=1, \cdots, J}$. At variance with the typical Kelly setting
we assume that the capital the gambler invests in the game
is represented not by conventional money, but instead by the ergotropy~(\ref{ERGOTROPY}) they store into a single Bosonic mode $A$ initially prepared into the density matrix $\hat{\rho}_0$ which, as anticipated in the introduction, is assumed to belong to the Gaussian subset.

To place their bet, at each race the gambler
divides  his capital exploiting an optical splitter analogous to one shown in Fig.~\ref{fig:BS} 
which distributes the input mode $A$ into $J$ independent  outputs characterized by effective transmissivities $\eta_j\in ]0,1[$, each associated
with an  individual horse and fulfilling the normalization condition
\begin{eqnarray} 
\sum_{j=1}^J \eta_j^2 =1\;. \label{NORM} 
\end{eqnarray} 
This implies that the capital
the gambler is placing on 
the horse $j$ is locally stored 
into the density matrix obtained by applying the purely lossy {BGC} mapping $\Phi_{\eta_j}^{(\text{loss})}$ of Eq.~(\ref{LOSSY})  to the entry capital
of the betting stage. The parameters $\eta_j$  clearly reflect the gambler game strategy  and play the role of the fractions $b_j$ of the original Kelly's model (more precisely, due to constraint ~(\ref{NORM}) 
the exact correspondence is between the $b_j$s and the squares  $\eta_j^2$ of the network transmissivities).  In case the $j$-th horse wins, the corresponding output port will hence undergo a coherent amplification event which pumps up the energy level of the
system via the gain parameter $k_j > 1$ and is returned to the gambler, the other sub-modes being instead absorbed away.  At the level of $A$ this results in a transformation 
described by the pure amplifier {BGC} mapping $\Phi_{k_j}^{(\text{amp})}$ introduced in Eq.~(\ref{AMPY}).  The parameters $k_j$ reflect the odds parameters $o_j$ of the classical model.  Again the exact correspondence is between the $o_j$s and the squares 
$k^2_j$ of the amplification amplitudes which allows us to translate the {\it fair odds}  and {\it super-fair odds} conditions  into  the requirements 
\begin{eqnarray}
\sum_{j = 1}^{J} \frac{1}{k^2_{j}}&=&1\;,  \qquad (\mbox{fair odds})   \label{FAIR}
\\ \sum_{j = 1}^{J} \frac{1}{k^2_{j}}&<&1\;, \qquad (\mbox{super-fair odds})\label{SFAIR} 
\end{eqnarray}   respectively, which we shall assume hereafter.

In summary, after each betting stage the gambler 
receives back the state of their single mode resource quantum memory,
transformed via the combined action of a purely lossy map and a pure amplifier, i.e. the BGC map which according to~(\ref{COMPO1}) can be written as
\begin{equation} \label{COMPO2} 
\Phi^{\text{}}_{{j}} := \Phi[{{g_j}};{\alpha_j}]= \Phi_{{k_j}}^{(\text{amp})} \circ \Phi_{{\eta_j}}^{(
\text{loss})},
\end{equation}
where now, for all $j\in \{ 1, \cdots, J\}$, the stretching and noise terms are given by 
\begin{eqnarray}
g_j&:=& k_j \eta_j\;,  \\
{\alpha_j} &:=& \frac{{k_j}^2(1-{\eta_j}^2) + {k_j}^2 -1}{2}\;.
\end{eqnarray}
The composition property ~(\ref{COMPO1}) also
lends itself well to the framework of repeated gambles described in Sec.~\ref{sec:KellyCriterion}. 
\begin{figure}[t!]
	\centering
	\includegraphics[width=0.92\linewidth]{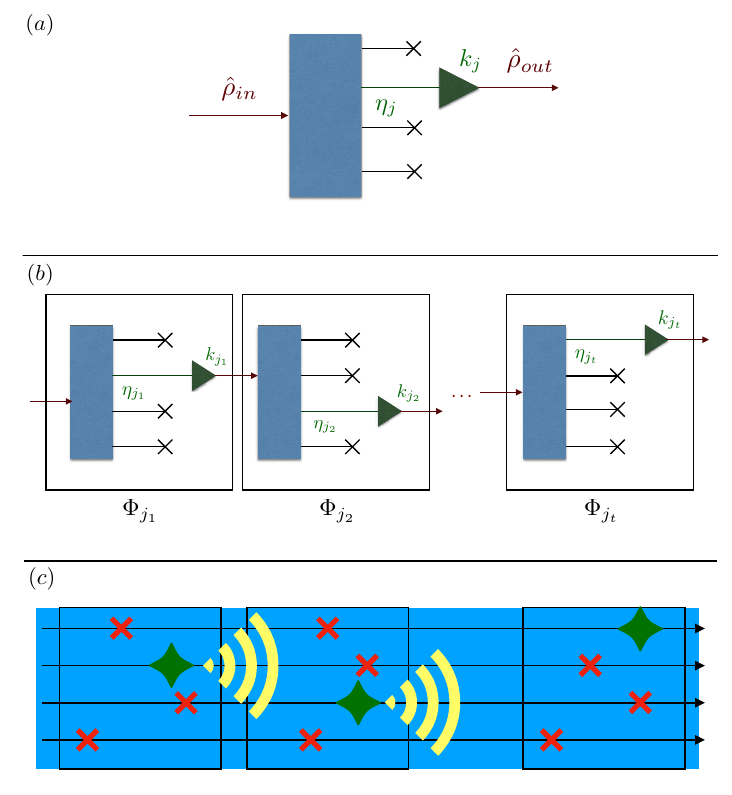}
	\caption{Schematic of the quantum payoff betting protocol:  Panel (a) 
	single betting stage; Panel (b) concatenation of multiple bettings. In the figures 
	the blue rectangles represents optical splitters analogous to the one depicted in Fig.~\ref{fig:BS}; 
	the first red arrow represent the state of $A$ which encode the invested capital whereas the second one is the
	one associated with the winning stake. Black arrows represent the losing fractions of capital, while the winning (green) one $\eta_j$ is \textit{amplified} by the odd $k_j$; Panel (c) schematic representation of the associated lumped-element model scheme describing the
	propagation of electromagnetic modes inside a multi-layer material (blue area): here the red crosses represent complete photon absorption, the
	the green diamond stimulated emission process, and the yellow elements multi-modes scattering events. The resulting scheme closely reminds us of a general random lasing process where light is randomly amplified (and in part absorbed) by a disordered material while maintaining a certain degree of quantum coherence~\cite{Wiersma2008}.
	  }
	 \label{fig:ATTAMP}
\end{figure}
Specifically, assuming the initial state of $A$ to be described by the density matrix $\hat{\rho}_{0}$, after the fist betting event, with probability $p_{j_1}$, the gambler will get the state
\begin{eqnarray}\hat{\rho}_{0}  \longrightarrow  \hat{\rho}_{1}:=\Phi^{\text{}}_{j_1}(\hat{\rho}_{0})\label{rho0torho1}\;,
\end{eqnarray} 
which they will use to place the second bet
(notice that following the same convention introduced in the previous
section, the functional dependence of  $\hat{\rho}_1$ 
upon the index  $j_1$ is left implicit).
Accordingly, with conditional probability $p_{j_2}$ the second horse
race will induce the following mapping on~$A$ 
\begin{eqnarray} \hat{\rho}_{1}  \longrightarrow 
\hat{\rho}_{2}:=\Phi^{\text{}}_{j_2}( \hat{\rho}_{1}) \label{rho0torho2}\;.
\end{eqnarray} 
At the level of input density matrix $\hat{\rho}_0$, Eq.~(\ref{rho0torho2}) 
corresponds to the stochastic mapping 
\begin{eqnarray} \hat{\rho}_{0}  \longrightarrow  \hat{\rho}_{2}= 
\bar{\Phi}_2 (\hat{\rho}_{0}) \label{TWOSTAGE} 
\label{rho0torho21}\;,
\end{eqnarray} 
where 
\begin{eqnarray} \label{TWOEVE} 
\bar{\Phi}_2
:=    \Phi^{\text{}}_{j_2}\circ
\Phi^{\text{}}_{j_1}\;,
\end{eqnarray} 
is the composite BGC map obtained by merging together $\Phi^{\text{}}_{j_2}$ and 
$\Phi^{\text{}}_{j_1}$ and occurs with joint probability
$\bar{p}_2
=p_{j_2} p_{j_1}$.  More generally,  indicating with $\hat{\rho}_{j_1,\cdots, j_{t}}$ 
the state of $A$ after $t$ steps along a given stochastic realization of the horse race described by a winning string of event $j_1, j_2, \cdots, j_t$, 
the  transformation~(\ref{rho0torho2}) gets replaced by the mapping 
\begin{eqnarray}
\hat{\rho}_{t-1} \longrightarrow \hat{\rho}_{t}:= \Phi^{\text{}}_{j_t}(
\hat{\rho}_{t-1})
\label{rho0torhot}\;,
\end{eqnarray}  
which occurs with probability  $p_{j_t}$.
Similarly the transformation (\ref{TWOSTAGE}) becomes 
\begin{eqnarray}  
\hat{\rho}_{0}  \longrightarrow  \hat{\rho}_{t}:= \bar{\Phi}_{t}(\hat{\rho}_{0})\label{rho0torhotT}\;,
\end{eqnarray}  
where now the $t$-fold mapping  $\bar{\Phi}_{t}$
is given by 
\begin{eqnarray}
\bar{\Phi}_{t}:= \Phi^{\text{}}_{j_t}\circ \cdots \circ
\Phi^{\text{}}_{j_2} \circ \Phi^{\text{}}_{j_1}
\label{rho0torhotTT}\;,
\end{eqnarray}  
and occurs with joint probability $\bar{p}_t$  defined as in 
Eq.~(\ref{PROBev}) -- the whole process being visualized in Fig. \ref{fig:ATTAMP}.
We notice that from Eq.~(\ref{COMPO2}) and the general
composition rule~(\ref{COMPO1}) it follows that
\begin{equation} \label{COMPO33} 
\bar{\Phi}_{t} =  \Phi[\bar{g}_t;\bar{\alpha}_t] \;, 
  \end{equation}
where the parameter 
$\bar{g}_t$ is given by  
\begin{eqnarray} \bar{g}_t \label{DEFBARGT} 
:= g_{j_t} \cdots  g_{j_2} g_{j_1} \;, 
\end{eqnarray}
while $\bar{\alpha}_t$ is defined by 
the recursive formula 
\begin{eqnarray} \label{DISTRALPHA} 
\left\{ \begin{array}{l}
\bar{\alpha}_{t=1}
:= {\alpha}_{j_1} \;, \\ \\
\bar{\alpha}_t= g_{j_t}^2\,\bar{\alpha}_{t-1} + \alpha_{j_t}\;,
\end{array} \right. 
\end{eqnarray}
which, as shown in Lemma~\ref{lem:recursive} of the  Appendix, admits the following solution
\begin{equation}\label{eq:recursivesolutionalphaTESTO}
\bar{\alpha}_t 
= \bar{g}^2_t \sum_{\ell = 1}^{t} \frac{\alpha_{j_\ell}}{\bar{g}_\ell^2}.
\end{equation}
Before proceeding any further it is worth stressing that  the map $\bar{\Phi}_{t}$ defined above
describes the state of  $A$ after $t$ steps along a given trajectory of the betting process.
Taking the weighted mean of these super-operators with probabilities  $\bar{p}_t$ gives instead the transformation which defines the  evolution of the system irrespective of the history of the betting process. Explicitly this is given by 
\begin{equation}  \label{AVEMAP} 
\bar{\Psi}_{t}:=  \sum_{j_1,\cdots,j_t} p_{j_1} \cdots p_{j_t} {\Phi}_{j_1}\circ \cdots\circ {\Phi}_{j_t}
= \underbrace{\Psi\circ \cdots \circ \Psi}_\text{$t$-times}  \;, \end{equation}
with  $\Psi$ the CPTP map defined by the convex convolution  
\begin{eqnarray}
\Psi:= \sum_{j=1}^J  {p}_j {\Phi}_{j}\;, \end{eqnarray} 
which does not necessarily correspond to a BGC. 
It is finally worth stressing that the scheme  provides an effective, lumped-element model description of 
propagation of light modes inside a multi-layer material where, depending on the layer, they get randomly amplified (as in a stimulated emission process) or attenuated (as in an absorption process), the  stochastic channel  $\bar{\Phi}_{t}$ describing the system evolution conditioned to the realization of a specific history of events -- see panel (c) of Fig.~\ref{fig:ATTAMP}. As a matter of fact our setting provides an effective model for random lasing~\cite{Wiersma2008}. A random laser is a device where coherent light amplification is achieved without the presence of an optical cavity, through a sequence of stochastic events that may occur in disordered materials~\cite{markushev, Lawandy1994} due to multiple-scattering effects~\cite{letokhov, PhysRevE.54.4256}. 
At the level of the the electromagnetic radiation, such processes will lead to trajectories which 
ultimately can be characterized as multi-mode versions of  the maps $\bar{\Phi}_{t}$ introduced in Eq.~(\ref{rho0torhotTT}), with the channels $\Phi^{\text{}}_{j}$ describing the gains and losses experienced by 
the signal in their random walk through the active media.

\subsection{Quantum payoff  of the BGC model}\label{sec:payoff}
From the above results it is now easy to express 
the energy  associated with the state of $A$ at a generic step $t$ along
a given trajectory. Indeed  from (\ref{defm}) and (\ref{defsigma}) it follows that 
the first order expectation vector $\bar{m}_t $ and the covariance matrix $\bar{\sigma}_t$ of $\hat{\rho}_{t}$ can be written as
\begin{eqnarray} \label{defmnew} 
\bar{m}_t & =&\bar{g}_{t} \; m \;, \\
\label{defsigmanew} 
\bar{\sigma}_t& =& \bar{g}_{t}^2 \sigma + 2 \bar{\alpha}_{t}\; \mathbb{I}\;,
\end{eqnarray}
where $m$ and $\sigma$ are their corresponding counterparts associated with the input 
state $\hat{\rho}_{0}$. 
Therefore, the mean energy of the system is now given by 
\begin{equation}  
\bar{E}_t:=E(\hat{\rho}_{t})=\bar{g}_{t}^2  E_0 + \bar{\alpha}_t
= \bar{g}_{t}^2 ( E_0 + \bar{\gamma}_t) \;, \label{DEFENERGnew}
\end{equation}
where we introduced the quantity 
\begin{eqnarray}\label{eq:ell}
\bar{\gamma}_t:= \frac{\bar{\alpha}_{t}}{\bar{g}_t^2} 
= \sum_{\ell = 1}^{t} \frac{\alpha_{j_\ell}}{\bar{g}_\ell^2},
\end{eqnarray}
and where $E_0:= E(\hat{\rho}_0)$ is the mean energy of the input state.
Parametrizing the latter as in (\ref{DEFENERGGAUS}),
Eq.~(\ref{DEFENERGnew}) can then be casted in the form  
\begin{eqnarray}
\bar{E}_t=\bar{g}_t^2\left(\frac{2{{{n}}}+1}{2}\cosh(2|\zeta|)+\bar{\gamma}_t +\frac{m^2}{2} \right)
\;.\end{eqnarray}
Similarly, reminding that $\hat{\rho}_0$ is a Gaussian state, from Eqs.~(\ref{DIFFE}), (\ref{defmnew}), and (\ref{defsigmanew}), 
we can express the ergotropy $\bar{\mathcal{E}}_t :=\mathcal{E}(\hat{\rho}_t)$ of  $\hat{\rho}_t$ as 
\begin{eqnarray} \label{DEFOUTERG}
\bar{\mathcal{E}}_t &=& \bar{E}_t  - \bar{n}_{t} -1/2 \;,
\end{eqnarray} 
with $\bar{n}_t\geq 0$ being the thermal contribution to the total photon number of the final state computed as 
\begin{equation}
\bar{n}_t := \bar{g}_t^2 \frac{\sqrt{(2{{{n}}}+1)^2+4\bar{\gamma}_t(2{{{n}}}+1)\cosh(2|\zeta|)+4\bar{\gamma}_t^2}}{2} -\frac{1}{2}\;. 
\end{equation} 

In our construction Eq.~(\ref{DEFOUTERG}) 
represents the payoff of the gambler after $t$ steps. 
By re-organizing the various contributions, it can be conveniently casted into the form 
\begin{eqnarray} \label{ERGOTRAJECTj} 
\bar{\mathcal{E}}_t 
&=&\bar{g}_t^2 \bigg({\mathcal{E}}_0  - \bar{\Delta}_t\bigg)\;,
\end{eqnarray}
where 
\begin{eqnarray}
{\mathcal{E}}_0:= {\mathcal{E}(\hat{\rho}_0)}= \frac{(2{{{n}}}+1)(\cosh(2|\zeta|)-1) + {m}^2}{2}\;,
\end{eqnarray} 
 is the initial value of the
ergotropy, i.e. the initial capital invested by the player. In the above expression  
\begin{equation} \label{DELTADEF}
\bar{\Delta}_t:= \frac{(2{{{n}}}+1)}{2}\Big({\sqrt{1+4  \bar{\Gamma}_t\cosh(2|\zeta|)+4\bar{\Gamma}_t^2}} -
{(1+2\bar{\Gamma}_t )}\Big)\;, 
\end{equation} 
is a non-negative quantity which, while not being directly linked to ${\cal E}_0$, is an explicit functional of the selected input state $\hat{\rho}_0$ 
 (here $\bar{\Gamma}_t := \frac{\bar{\gamma}_t}{2{{{n}}}+1}$). This fact   
 has profound consequences as it implies that, differently from 
the original Kelly's scheme, in our setting the final payoff  $\bar{\mathcal{E}}_t$ is not just proportional to the initial invested capital ${\cal E}_0$, but also 
depends in a nontrivial way on the
{\it type} of input state $\hat{\rho}_0$ that  was selected to "carry" the value of such capital. Accordingly in discussing
the performance of the betting scheme   we are now facing an extra layer of 
optimization where  the gambler, beside selecting the transmissivities $\eta_j$ and deciding the value of ${\cal E}_0$ they want to invest in the game, has also the possibility of deciding which one of the input states 
$\hat{\rho}_0$ that belongs to the same iso-ergotropy manifold  (see of Fig.~\ref{fig:ergo}) they want to adopt: 
different choices will indeed result in different $\bar{\Delta}_t$ and hence in different payoff values~$\bar{\mathcal{E}}_t$.

\subsection{Input state optimization} \label{sec:payoffinput}

Let start observing that 
from  Eq.~(\ref{ERGOTRAJECTj}) it follows that  if the input state of the system $\hat{\rho}_0$ is a purely Gibbs thermal state of the system Hamiltonian
(i.e. if $m=\zeta=0$) then the capital  gain is strictly null at all time steps, i.e. $\bar{\mathcal{E}}_t=0$ for all $t$: this implies
that in order to  have at a chance of getting some positive payoff, the player needs  to invest at least a  non-zero capital (a very reasonable 
contraint). 
Notice also that if we restrict to a \textit{semiclassical} case in which $\hat{\rho}_0$
is described by a noisy coherent state (i.e. for $\zeta=0$), 
we get $\bar{\Delta}_t=0$ and hence
\begin{eqnarray} \label{COHERENTSTATES} 
\bar{\mathcal{E}}_t=\bar{g}_t^2\frac{m^2}{2} = \bar{g}_t^2 {\mathcal{E}}_0\;,
\end{eqnarray} 
implying that under this assumption the capital evolves as the classical case of  a geometric game.
Departures from Eq.~(\ref{COHERENTSTATES}) can instead be observed  when  
squeezing is present in the initial state of the system paving the way to the optimization problem 
 mentioned at the end of the previous paragraph, i.e.  deciding what is the best choice of $\hat{\rho}_0$ to be employed in the betting 
 scheme for assigned values of $\eta_j$ and~${\cal E}_0$.

To tackle this issue  we find it useful to 
define two different figures of merit, i.e. the renormalized ergotropy functional $\bar{\mu}_t$  which, as $\bar{\mathcal{E}}_t$
gauges the absolute payoff, and
the output ergotropy ratio $\bar{r}_t$, aimed instead to characterize the amount of resources wasted in the process. 
The first quantity is simply obtained by rescaling the payoff capital $\bar{\mathcal{E}}_t$ by the term $\bar g^2_t\mathcal{E}_0$ in order to contrast
the potential divergent behaviour of the former, i.e. 
\begin{eqnarray} \label{defmu}
\bar{\mu}_t := \frac{\bar{\mathcal{E}}_t}{\bar g^2_t\mathcal{E}_0} = 1 - \frac{\bar\Delta_t}{\mathcal{E}_0}\;.
\end{eqnarray}
For fixed choices of $\eta_j$ and ${\cal E}_0$, $\bar{\mu}_t$ exhibits the same  dependence upon $\bar{\mathcal{E}}_t$ 
upon the input state $\hat{\rho}_0$, with the main advantage of being bounded from above by 1 (on the contrary, as we already mentioned, $\bar{\mathcal{E}}_t$ can explode).  
The second quantity instead is defined as 
the ratio between the ergotropy and the mean energy of the state $\hat{\rho}_t$, i.e. 
\begin{eqnarray} \label{DEFERRET}
\bar{r}_t &:=& \frac{ \bar{\mathcal{E}}_t}{\bar{E}_t} = \frac{\bar{\mathcal{E}}_t}{\bar{\mathcal{E}}_t +\bar{{{n}}}_t +1/2}\\
&=& r_0+ (1-r_0) \left( 1 - \frac{\sqrt{1+4\bar{\Gamma}_t\cosh(2|\zeta|)+4\bar{\Gamma}_t^2}}{1+2 (1-r_0)\bar{\Gamma}_t }\right)\;,  \nonumber 
\end{eqnarray}
where 
\begin{eqnarray} 
r_0 &:=& \frac{\mathcal{E}_0}{E_0}= \frac{\mathcal{E}_0}{\mathcal{E}_0 +{{{n}}} +1/2} \;, 
\end{eqnarray} 
 is the value of the ratio  computed on the input state $\hat{\rho}_0$. 
Notice that by construction  
$\bar{r}_t$ and ${r}_0$ are  bounded quantities that cannot 
exceed the value 1.  Furthermore  
 irrespective of the parameters of the model and from the outcome of the stochastic process,  $\bar{r}_t$ cannot be larger than its initial value~$r_0$: indeed 
an upper bound for $\bar{r}_t$ can be obtained by setting equal to zero the squeezing parameter of the initial state (i.e. $\zeta=0$) 
which leads to 
\begin{eqnarray} \label{ddf}
\bar{r}_t \leq  \frac{r_0}{1+2 (1-r_0)\bar{\Gamma}_t } \leq r_0 \;,\end{eqnarray}
the rightmost inequality following as a consequence of the positivity of $\bar{\Gamma}_t$.   
Equation~(\ref{ddf}) implies  that, even though the gambler has a chance of increasing its wealth (i.e the value of the ergotropy of the mode $A$), in the quantum model we are considering here this always occurs at the cost of an inevitable waste of resources.

It goes without mentioning that, as in the case of the payoff capital $\bar{\cal E}_t$, the figures of merit introduced above 
are stochastic quantities which strongly depend
on the random sequence of betting events.
Interestingly enough however   a mere analysis of the functional dependence of $\bar{\mu}_t$, $\bar{r}_t$ upon the input
data of the problem, allows us to draw some important conclusions
even without carrying out a full statistical analysis of the process (a task which,
as we shall see in the next section, is rather complex). For this purpose we first notice that from Eq.~(\ref{COHERENTSTATES}) 
it follows that setting $\zeta=0$, i.e. using input states $\hat{\rho}_0$
 which have no  squeezing, is the proper choice to bust  
$\bar{\mu}_t$ to its upper bound 1. This choice  also ensures the saturation of the first inequality in Eq.~(\ref{ddf}), leading us to an output ergotropy ratio that expressed in terms of ${\cal E}_0$ can be written as 
\begin{eqnarray} \label{ddf11}
\bar{r}_t =  \frac{{\cal E}_0}{{\cal E}_0+ \bar{\gamma}_t + {{{n}}} +1/2  } \leq   \frac{{\cal E}_0}{{\cal E}_0+ \bar{\gamma}_t  +1/2  }  \;,\end{eqnarray}
the last inequality being saturated by setting ${{{n}}}=0$. 
Accordingly we can conclude that, for fixed values of ${\cal E}_0$ and game strategy $\eta_j$,
 the best choice 
 for the input density matrix  $\hat{\rho}_0$ is the pure coherent input state of the selected iso-ergotropy manifold (i.e. the
 green dot of Fig.~\ref{fig:ergo}) since, irrespective of the specific outcomes of the stochastic process that drives the system dynamics, 
  this will ensure
  the saturation of the functional upper bounds for both the two figures of merit we have introduced. 
The above inequalities show the optimality of coherent input states for any $t$ and $\bar{\gamma}_t$. At each step the resulting concatenation of BGCs results in a noisy attenuation/amplification map, so the gambler cannot gain more ergotropy by using a different strategy even with an adaptive one.

\section{Statistical analysis}
\label{sec:KellyQuantumClassical}

In  Kelly's language each bet in our gambling problem is identified with the drawing of a 
Bernoulli random variable and the overall gain/loss of each bet is identified with the parameters  $g_j$ and ${\alpha_j}$. 
In particular, in the new playing field, the analogoue of the gambler's wealth after $t$ races, $\bar{S}_t$ of Sec.~\ref{sec:KellyCriterion}, is the quantity $\bar{\mathcal{E}}_t/\mathcal{E}_0$ of Eq. (\ref{DEFOUTERG}).
As discussed in the previous section the latter  is computed through the action of the BGC $\bar{\Phi}_t$ of Eq. (\ref{COMPO33}) which determines the evolution of the quantum mode $A$ via 
Eq.~(\ref{rho0torhotT}) and which is fully characterized by the parameters $\bar{g}_t$ and 
$\bar{\alpha}_t$. 
As anticipated in  Eq.~(\ref{COHERENTSTATES}), in the absence of input squeezing (i.e. $\zeta=0$) 
the first of these two quantities, i.e. $\bar{g}_t$, represents the amplification of the input signal in the model,
the gambler's wealth being directly proportional 
to $\bar{g}^2_t$. The presence of squeezing in the initial state, makes the  connection between  $\bar{g}_t$ and 
the payoff more complex (see Eq. (\ref{DEFOUTERG}))
still 
a direct comparison between Eq.~(\ref{DEFBARGT})
and  (\ref{DEFST}) makes it explicit that  $\bar{g}_t$ shares the same statistical properties of the function 
$\bar{S}_t$ of the classical setting. In particular also in this case we can use the strong LLN 
to replace Eq.~(\ref{eq:slln}) with 
\begin{equation}\label{eq:sllndd}
{\Pr}\left[\lim_{t\rightarrow\infty}\frac{{\log_2} \bar{g}_{t}
}{t}=G(\boldsymbol{\eta}, \mathbf{k},  
\mathbf{p})\right]=1\;,
\end{equation}
where now 
\begin{eqnarray} \label{DOUBLINGQ} 
G(\boldsymbol{\eta}, \mathbf{k}, \mathbf{p}):= 
\sum_{j = 1}^{J}  p_j {\log_2}(g_j)=\sum_{j = 1}^{J}  p_j {\log_2}(k_j
\eta_j)  \;,
\end{eqnarray}  
substitute the doubling rate $W(\mathbf{b}, \mathbf{o}, 
\mathbf{p})$ of Eq.~(\ref{DOUBLING}) -- here $\boldsymbol{\eta}:= (\eta_1, \cdots,\eta_J)$,  $\mathbf{k}:=(k_1,\cdots, k_J)$ and $\mathbf{p}:=(p_1,\cdots, p_J)$.
Accordingly  we can claim that almost surely
$\bar{g}_t \approx 2^{t G(\boldsymbol{\eta}, \mathbf{k}, \mathbf{p})}$ for $t$ sufficiently large.
Furthermore  simple algebra allows one to 
show that 
the optimal energy splitting strategy which yields the maximum
doubling rate 
\begin{eqnarray} G^*( \mathbf{k}, \mathbf{p}) :=\max_{\boldsymbol{\eta}} G(\boldsymbol{\eta}, \mathbf{k}, \mathbf{p})\;,\end{eqnarray} 
which is obtained by setting 
\begin{eqnarray}
\eta^2_j=p_j\;, \label{QKELLY} 
\end{eqnarray}  
which represents precisely the Kelly criterion for the optimal choice of the betting strategy. 

Let us now turn our attention to the second stochastic quantity which defines the model, i.e. the parameter $\bar{\alpha}_t$.
This term  measures the noise generated in the process due to the random application of quantum BGC, and it appears thanks to the non-commutative nature of our quantum model. From a close inspection of Eq.~(\ref{DISTRALPHA}) we observe that 
$\bar{\alpha}_t$ is generated by the iteration of a family of random Lipschitz maps\cite{bhattacharya2007random}, precisely affine random maps with random slope and random intercept. 
For instance in the case where
the total number of horses in the race is $J=2$ with $p_1=p$, $p_2=1-p$, this can be made explicit by starting
with 
$\bar{\alpha}_0=0$ and writing $\bar{\alpha}_{t} = f(\bar{\alpha}_{t-1})$ with 
\begin{equation} \label{DEFFF} 
f(x):=\begin{cases}g_1^2x+\alpha_1\quad \text{with probability}\quad p\\
g_2^2x+\alpha_2\quad \text{with probability}\quad 1-p\;.
\end{cases}
\end{equation}

Depending on the value of $G^*(\mathbf{k}, \mathbf{p})$ we can distinguish different scenarios.
The one that is best characterized at the mathematical level is when the model is contracting on average \cite{nicol2002fine}, i.e. when $G^*(\mathbf{k}, \mathbf{p}) < 0$, which unfortunately is the less interesting case for our model as it represents a betting scheme where the player asymptotically looses all its capital. Under this condition there exists a unique stationary invariant probability measure for the process which can be singular or absolutely continuous \cite{diaconis1999iterated}. For instance, referring to the parametrization introduced in Eq.~(\ref{DEFFF}), when $p=\frac{1}{2}$ in the case $\frac{1}{2}<g_1^2=g_2^2<1$ if $\alpha_1\not=\alpha_2$ then, up to an affine map, the invariant measure of the system is a Bernoulli convolution \cite{gouezel2019entropic}. These measures have been studied since the 1930’s, revealing connections with harmonic analysis, the theory of algebraic numbers, dynamical systems and fractal geometry. It is well-known that in some cases they are singular (more precisely when the inverse of the contraction rate $g_2=g_1$ is an especially "symmetric" type of irrational algebraic number, a so-called  Pisot number \cite{pisot}). When $g_1^2=g_2^2=\frac{1}{2}$ instead, the measure is the Lebesgue measure whereas if $g_1^2=g_2^2<\frac{1}{2}$ it is the uniform measure on a Cantor set (i.e. a totally disconnected closed subset of the real line consisting entirely of boundary points). Such sets may have zero or positive Lebesgue measure. They are typically fractal sets and the uniform measure is equivalent to the the Hausdorff measure of dimension equal to the dimension of the Cantor set (see \cite{falconer} for more details).
Generally speaking in the contractive-on-average  case the fundamental fact which one can exploit is that the random family of maps contracts the Wasserstein metric on compactly supported probability measures. If this assumption is not met then the situation is considerably more complex: orbits can be dense on the real half-line \cite{misiurewicz2005real} and one can investigate to what extent their distribution is uniform \cite{bergelson2006affine}. 

The situation becomes even more problematic in the case 
 which is more interesting for us, i.e. for maps that  are
   expanding-on-average, i.e. when $G^*(\mathbf{k}, \mathbf{p}) > 0$.  
Not much attention has been devoted to the study of this scenario: as a matter of fact,  the only reference we could find\cite{demichel2018renormalization} investigates only the orbits of the associated topological dynamical systems obtained by suitably renormalizing the $n$--th iterates.
We contribute to the effort by  noticing that in these cases due to the fact that  $\bar{g}_t$ is almost surely  a divergent quantity (see e.g. (\ref{eq:sllndd})),
it is convenient to focus on the renormalized 
version $\bar{\gamma}_t$  of $\bar{\alpha}_t$ introduced in Eq.~(\ref{eq:ell}), which
determines the stochastic evolution of 
indicators $\bar{r}_t$ and $\bar{\mu}_t$ defined in Sec.~\ref{sec:PlayingField}.
It turns out that $\bar{\gamma}_t$ is rather regular and for finite $t$,  its first and second moments
can be computed (see Lemma~\ref{lem:momenta} of the Appendix).
Furthermore it is possible to show that the random parameter $\bar{\gamma}_t$
converges almost surely whenever $\bar{g}_t^2$ diverges exponentially almost surely:
this is a consequence of the fact that $\bar{\gamma}_t$ is $t$-th 
partial sum of the infinite series with $t$-th term $\alpha_{j_t}/\bar{g}^2_t$ 
that always exists due to the non-negativity of the terms, and it is guaranteed to be finite when the denominator can be upper bounded by an exponentially diverging term, as it gives a trivial upper bound by a convergent geometric series (see the Theorem~\ref{convlt} of the Appendix for a more detailed derivation of this result). 
We also observe that, since both $\bar{\mu}_t$ and  $\bar{r}_t$  are positive quantities  bounded from above, 
 it makes sense to determine  the asymptotic values $r$ and $\mu$ as the (point-wise) limit of the distribution $\bar{r}_t$ for $t\rightarrow\infty$:
\begin{equation} \label{DEFERRE} 
\mu := \lim_{t\rightarrow\infty}\bar{\mu}_t \qquad r := \lim_{t\rightarrow \infty}\bar{r}_t.
\end{equation}
Despite this nice property, determining the effective distribution of $\mu$ and $r$ (or 
those of $\bar{\mu}_t$ and  $\bar{r}_t$ for $t$ finite), remains however 
a rather complex task and at present we are not able to make general claims.  To compensate for this, 
we do however present a numerical analysis that allows us to at least get some 
 insights on the problem.

\subsection{Numerics}\label{sec:Numerics}

In this section we simulate numerically the  evolution of $\bar{\mu}_t$ and $\bar{r}_t$ over a sufficiently long time horizon to infer their asymptotic behaviour. 
In Figs.~\ref{fig:SQUEEZED}--\ref{fig:IRREG} the temporal evolution of 
the empirical distributions of these quantities have been reconstructed by sampling $10.000$ simulations for each temporal step and 
for various choices of the parameters and in the two-horse case (i.e. $J=2$).
In the  plots we also report  the temporal evolutions of the
average values of $\bar{r}_t$ and $\bar{\mu}_t$ extracted from the sampling. It goes without mentioning that due to the nonlinear dependence of these functionals  
upon the state of the system, they do not coincide with
the corresponding values computed on the average density matrix of the system obtained by applying the average map 
(\ref{AVEMAP}) to $\hat{\rho}_0$. 
In all the figures we also exhibit  an educated guess for the average
value of $\bar{r}_t$  obtained by naively replacing the  $\bar{\Gamma}_t$ terms that appears in Eq.~(\ref{DEFERRET})
with its average value $\mathbb{E}[\bar\Gamma_t]$ computed as in the Appendix.
This quantity  can be thought of as a \textit{mean field} approximation of our random variable and it is written as
\begin{equation} \label{DEFERREFALSO}
\tilde{r}_t := 1- \frac{\sqrt{1+4\mathbb{E}\left[\bar\Gamma_t\right]\cosh(2|\zeta|)+ 4\mathbb{E}\left[\bar\Gamma_t\right]^2}}{\cosh(2|\zeta|) + 2\mathbb{E}\left[\bar\Gamma_t\right]+F},
\end{equation}
with 
\begin{eqnarray}
\mathbb{E}\left[\bar\Gamma_t\right] =
\frac{1-\mathbb{E}\left[1/g^2\right]^{t}}{1- \mathbb{E}\left[1/g^2\right]}\; \frac{ \mathbb{E}\left[{\alpha}/{g^2}\right]}{2{{{n}}} + 1}\;,
\end{eqnarray} 
converging to the asymptotic value $ \frac{1}{1- \mathbb{E}\left[1/g^2\right]}\; \frac{ \mathbb{E}\left[{\alpha}/{g^2}\right]}{2{{{n}}} + 1}$.

Let us now enter into the details of our numerical analysis.  In  Fig.~\ref{fig:SQUEEZED}, focusing on the same betting scenario,  we compare the results obtained when adopting the optimal (Kelly) betting strategy~(\ref{QKELLY}) 
  (panels (a) and (b)) with those of a sub-optimal one (panels (c) and (d)), for a pure squeezed input state, under super-fair odds conditions. As clear from the plots the Kelly strategy leads to a beneficial increment both  
 in terms of  the mean values of  $\bar{r}_t$ and $\bar{\mu}_t$ (attaining higher values in the (a) and (b) panels), as well as in terms of their spreads (more concentrated  toward higher values in the (a) and (b) panels). 
Similar findings are confirmed  in Fig.~\ref{fig:coherent} were instead we focus on the case of a pure coherent state 
(in this case we only present $\bar{r}_t$, since one always get  $\bar{\mu}_t =1$ for all $t$ due to Eq.~(\ref{COHERENTSTATES})). 
A direct comparison between the panels (a) of the Figs.~\ref{fig:SQUEEZED} and \ref{fig:coherent}, also show that 
under the same betting conditions, and input ergotropy resources, 
the coherent inputs provide better performances than  pure squeezed states in agreement with the results of Sec.~\ref{sec:payoffinput}.
This fact is also confirmed by the data of Fig.~\ref{fig:EFF_DIAG} which
present a numerical estimation of the asymptotic averaged values of $\mu$ and $r$ of Eq.~(\ref{DEFERRE})
for various choices of pure (displaced-squeezed) input state while maintaining the same (optimal) betting strategy. The detrimental influence of thermal photons in the input state 
is instead addressed in  Fig.~\ref{fig:NOISY} where  the temporal distribution of  $\bar{r}_t$ is studied under optimal betting strategies and the same game conditions
discussed in the panels (a) of Figs.~\ref{fig:SQUEEZED} and~\ref{fig:coherent}: 
despite exhibiting the same input ergotropy  one notices that the range and the mean value 
exhibited by $\bar{r}_t$ get contracted with respect to both the coherent and squeezed case. 

\begin{figure} 
	\centering
	\includegraphics[width=250px]{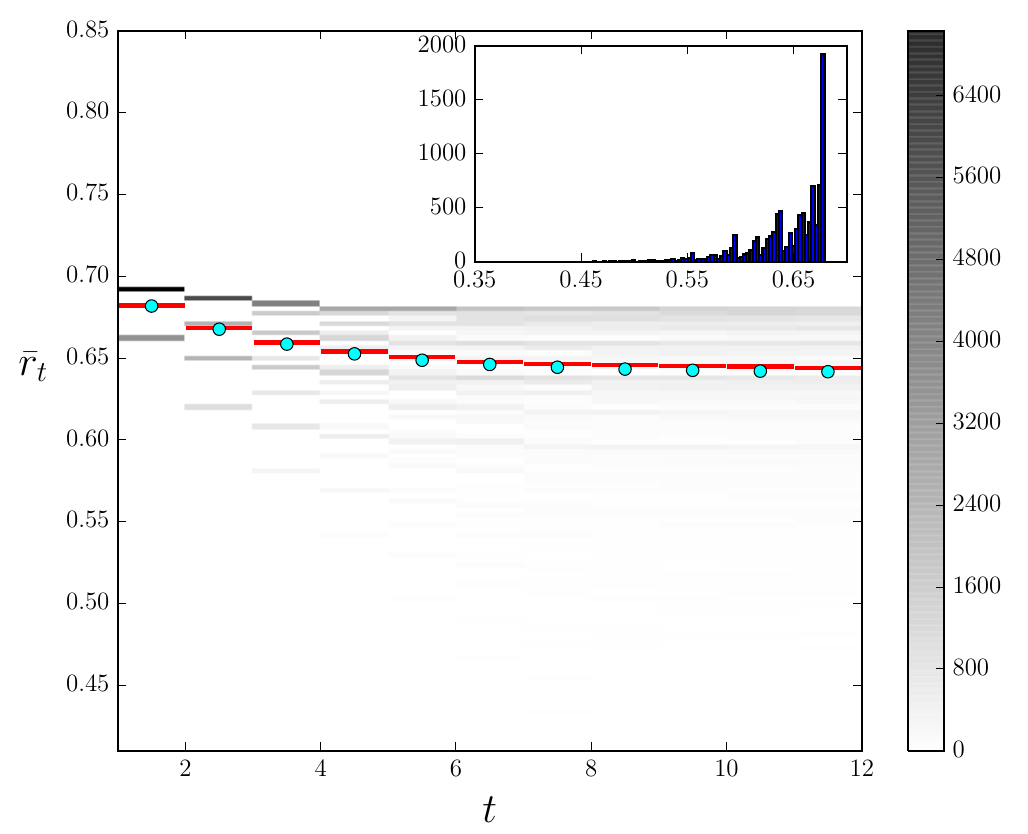}
	\caption{Time evolution of the empirical distribution of $\bar{r}_t$ of Eq.~(\ref{DEFERRET})  for a two horses game ($J=2$), under super-fair odds assumptions
		(\ref{SFAIR}) defined by the vector $\mathbf{k}=(\sqrt{3},\sqrt{3})$, 
		and thermal displaced (zero squeezing) density matrix 
		($m^2=50$, $\zeta=0$, ${{{n}}}=10$) with ergotropy value ${\cal E}_0=25$.
		Here we assumed optimal betting strategy~(\ref{QKELLY}) obtained by  setting $\mathbf{p}= (0.7,0.3)$ and 
		$\boldsymbol{\eta}=(\sqrt{0.7},\sqrt{0.3})$.
		As in Fig.~\ref{fig:SQUEEZED} and \ref{fig:coherent},  the data were obtained over samples of $10.000$ simulations.
		The red lines in the plots represent  the empirical average of the sample while the cyan dots are the mean field approximation $\tilde r_t$ of Eq.~(\ref{DEFERREFALSO}). 
		The insets show		the asymptotic histograms of the associated values of $\bar{r}_t$
		sampled at $t=100$.
	}
	\label{fig:NOISY}
\end{figure}

As a general remark we point out  that the shape that we observe for the asymptotic distributions of  $\bar{r}_t$  and $\bar{\mu}_t$  are a direct consequence of the 
underlying binomial process which generates a random walk on $[0,1]$ with steps whose length and direction at time $t+1$ depend on the position at time $t$\cite{evertsz1992multifractal}.
We highlight also that when choosing $\boldsymbol{\eta}$ as the optimal Kelly-betting strategy~(\ref{QKELLY}) and $\bf k$ to be a vector of super-fair odds (\ref{SFAIR}), then 
we have finiteness of the first moment of $\bar{\gamma}_t$ -- see panels (a) and (b) of Figs.~\ref{fig:SQUEEZED}--\ref{fig:EFF_DIAG}. 
The same stability condition has been enforced also for all the other configurations we have considered, apart from the case reported in 
Fig.~\ref{fig:IRREG}: here  choosing $\boldsymbol{\eta}$
 as the optimal Kelly-betting strategy~(\ref{QKELLY}) and setting $\bf k$ to be a vector of fair odds (\ref{FAIR}),  while all the moments of $\bar{r}_t$ remain finite (because it has support contained in the interval $[0,1]$),
one has $\lim_{t\rightarrow\infty}{\mathbb{E}[\bar\gamma}_t] = \infty$ (see Lemma \eqref{lem:momenta} in appendix) which tends to compromise the
convergency of the simulations as evident from the reported data and the matching with the mean field
estimation~(\ref{DEFERREFALSO}).

\begin{figure}
	\centering
	\includegraphics[width=250px]{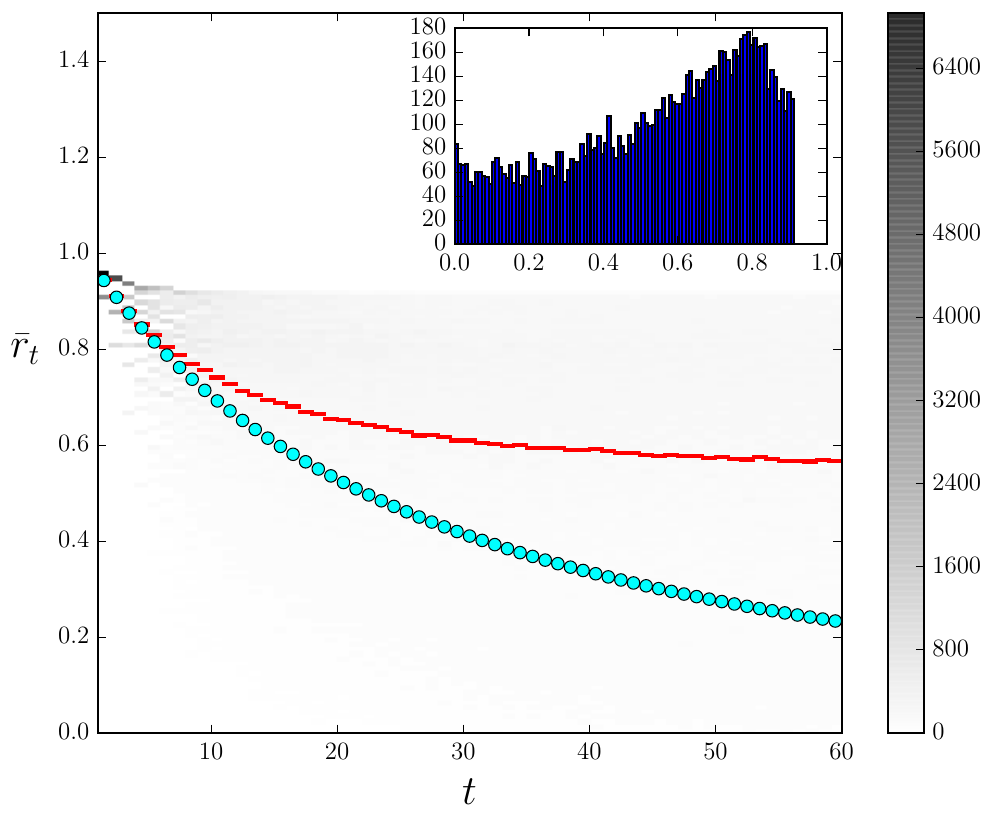}
	\caption{Time evolution of the empirical distribution of $\bar{r}_t$ of Eq.~(\ref{DEFERRET}) over a sample of $10.000$ simulations ($J=2$). The red line is the empirical average of the sample. The cyan dots represent $\tilde r_t$, as we see in this case due to the irregular moments of $\bar\gamma_t$, not only is the convergence slower, but $\tilde r_t$ is also completely different from the sampled mean. The inset figure is the asymptotic histogram, sampled a $t=250$. 
	Here we assumed fair odds condition~(\ref{FAIR}) 
	$\mathbf{k}=(\sqrt{2},\sqrt{2})$, and betting strategy which is slightly sub-optimal
	setting $\mathbf{p}= (0.7,0.3)$ and 
	$\boldsymbol{\eta}=(\sqrt{0.71},\sqrt{0.29})$. The input is a coherent state with 
	 $m^2=50$, $\zeta=0$, ${{{n}}}=0$. Notice that the convergence rate is slower, with respect to the previous plots, due to the increase of the {characteristic time} of the process that is proportional to $1/G$.}
	\label{fig:IRREG}
\end{figure}

\begin{figure}
	\centering
	\includegraphics[width=250px]{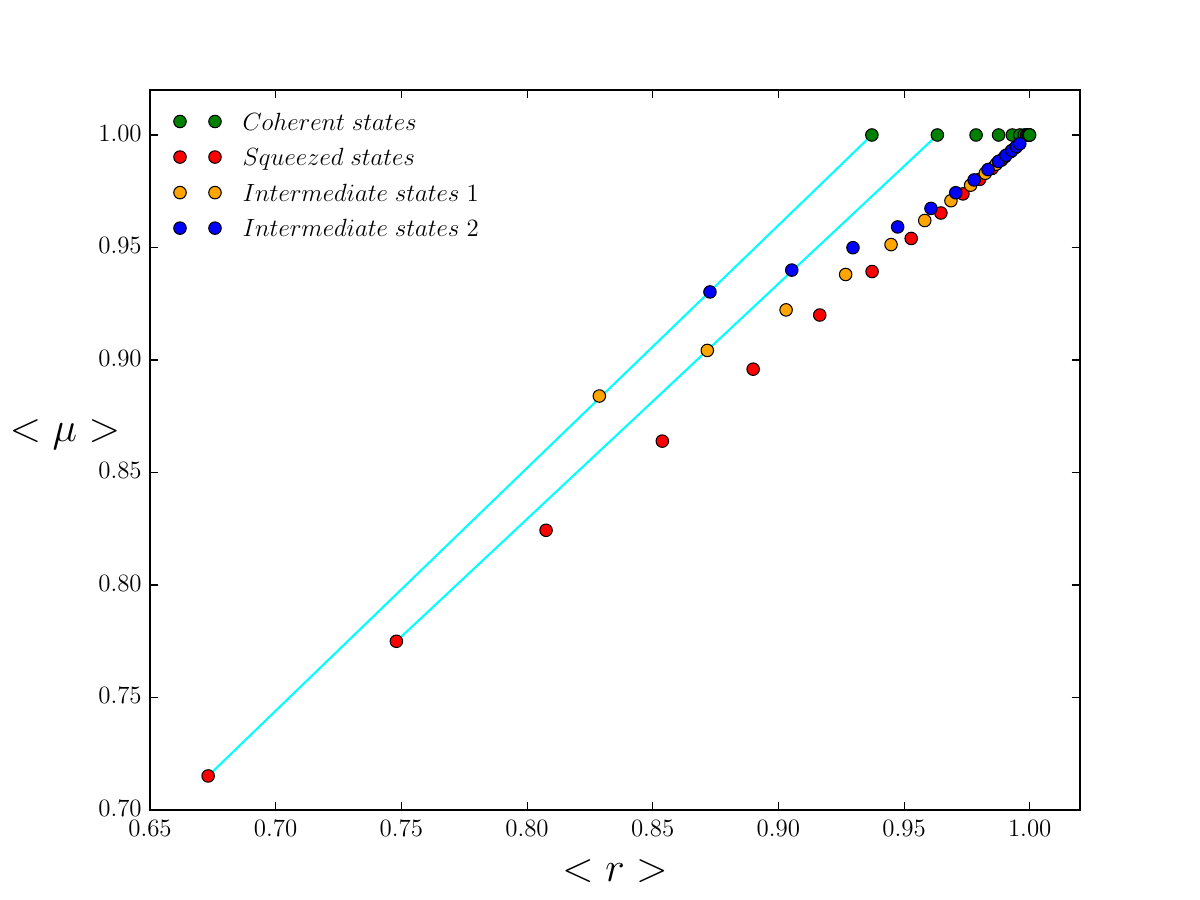}
	\caption{Values of the asymptotic averages for $r$ 
		and $\mu$ of Eq.~(\ref{DEFERRE})   for different types of input state $\hat{\rho}_0$ and for different values of input ergotropy $\mathcal{E}_0$ ranging from $50$ to $50.000$, logarithmically spaced.
		The red circles represent the squeezed input state where $\mathcal{E}_0 = \frac{\cosh(2|\zeta|)-1}{2}$, while the green circles show the coherent input state with $\mathcal{E}_0=m^2/2$. The orange circles show the result for mixed input state where $m^2=3\mathcal{E}_0/4$, finally the blue circles represent the input state where $m^2=7\mathcal{E}_0/8$. All of the aforementioned states have ${{{n}}} = 0$. 
		The cyan lines correspond to the first two \textit{isoergotropic} curves. 
		The quantities $r$ and $\mu$ have been obtained as empirical averages over a sample of $10.000$ simulations for a two horses game $J=2$, with super-fair odds conditions $\mathbf{k}=(\sqrt{3},\sqrt{3})$ (\ref{SFAIR}) and optimal betting strategy~(\ref{QKELLY}) with $\mathbf{p}= (0.7,0.3)$ and $\boldsymbol{\eta}=(\sqrt{0.7},\sqrt{0.3})$. The plots makes it explicit that for fixed value of the initial capital ${\cal E}_0$, 
		optimal perfomances in terms of $\mu$ and $r$ are always attained by using coherent input states.}
	\label{fig:EFF_DIAG}
\end{figure}

\begin{figure*}
	\centering
	\includegraphics[width=\textwidth]{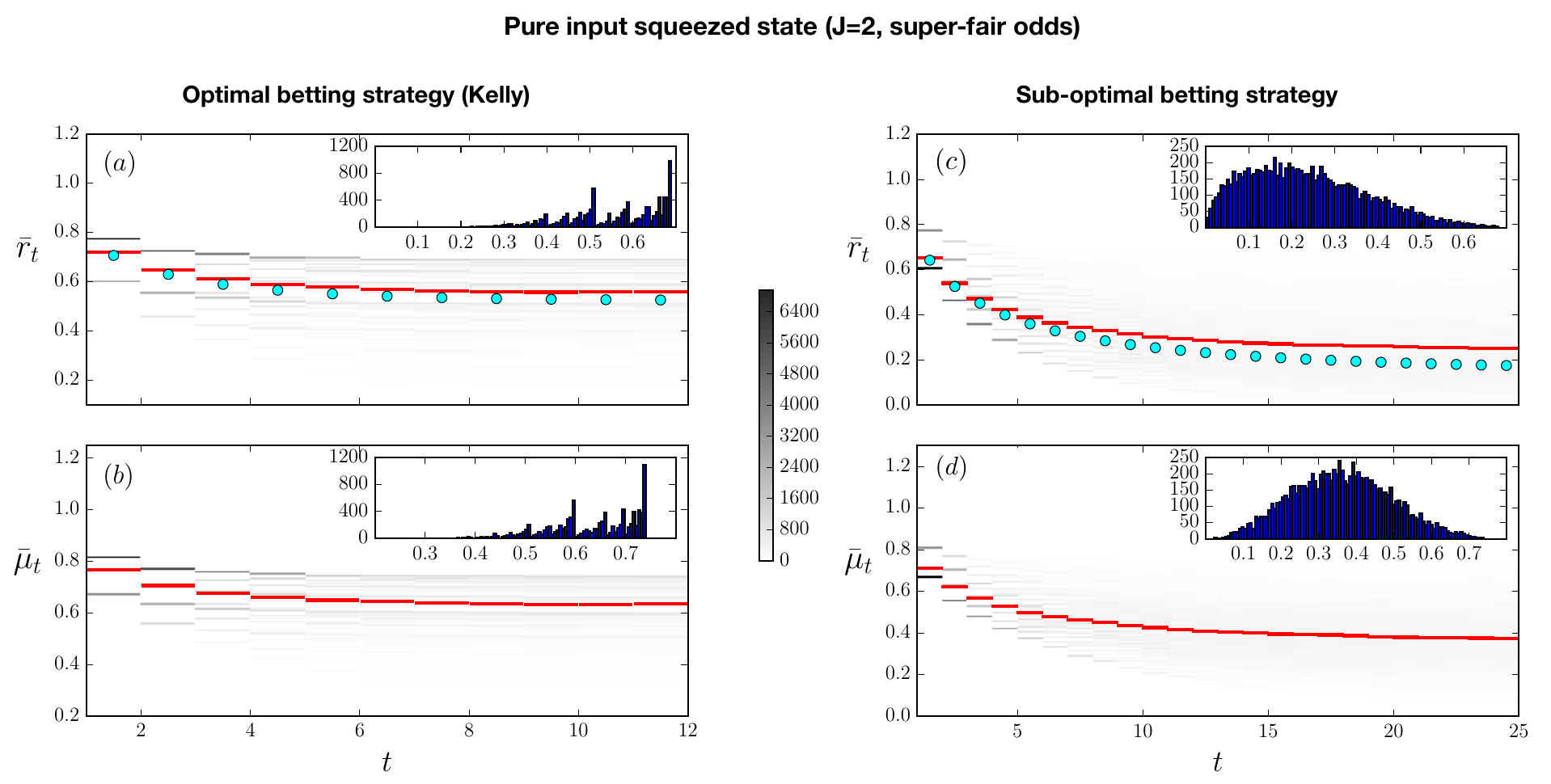}
	\caption{Time evolution of the empirical distributions of $\bar{r}_t$ of Eq.~(\ref{DEFERRET}) and 
		of the (renormalized) ergotropy of the model  $\bar{\mu}_t$  of Eq.~(\ref{defmu}), for a  two horses game ($J=2$) with probability vector $\mathbf{p}= (0.7,0.3)$, under super-fair odds assumptions
		(\ref{SFAIR}) defined by the vector $\mathbf{k}=(\sqrt{3},\sqrt{3})$, and pure squeezed input state of initial ergotropy value ${\cal E}_0=25$  ($m^2=0$, $\cosh(2|\zeta|)=51$, ${{{n}}}=0$).
		Panels (a) and (b): empirical distributions of  $\bar{r}_t$ and $\bar{\mu}_t$ obtained under optimal betting conditions~(\ref{QKELLY}) corresponding to $\boldsymbol{\eta}=(\sqrt{0.7},\sqrt{0.3})$;		Panels (c) and (d):  empirical distributions of  $\bar{r}_t$ and $\bar{\mu}_t$  under sub-optimal betting strategy with  $\boldsymbol{\eta}=(\sqrt{0.3},\sqrt{0.7})$.
		All the data were obtained over samples of $10.000$ simulations.
		The red lines in the plots represent  the empirical average of the sample; the cyan dots are the mean field approximation of the mean $\tilde r_t$ defined in Eq.~(\ref{DEFERREFALSO});
		the insets present instead 
		the asymptotic histograms of the associated values of $\bar{r}_t$ and $\bar{\mu}_t$
		sampled at $t=100$ (panels (a) and (b)), and $t=150$ (panels~(c) and (d)). 
		\label{fig:SQUEEZED}}
\end{figure*}

\begin{figure*}
	\centering
	\includegraphics[width=\textwidth]{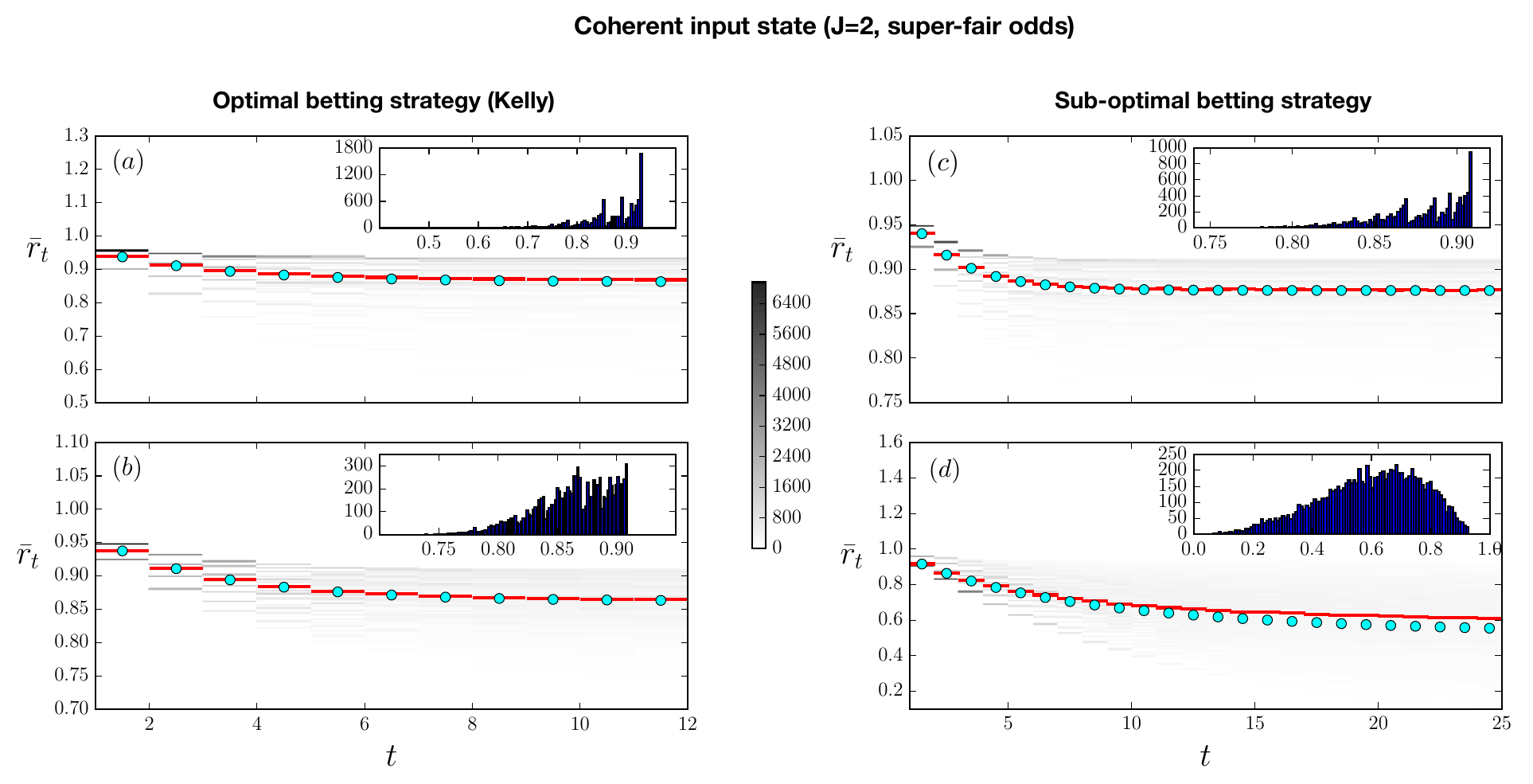}
	\caption{Time evolution of the empirical distributions of $\bar{r}_t$ of Eq.~(\ref{DEFERRET}) for a  two horses game ($J=2$), under super-fair odds assumptions
		(\ref{SFAIR}) defined by the vector $\mathbf{k}=(\sqrt{3},\sqrt{3})$, and pure coherent input state of ergotropy value ${\cal E}_0=25$  ($m^2=50$, $\zeta=0$, ${{{n}}}=0$).
		Panels (a) and (b): empirical distributions of  $\bar{r}_t$  under optimal betting conditions~(\ref{QKELLY})
		obtained by  setting $\mathbf{p}= (0.7,0.3)$, $\boldsymbol{\eta}=(\sqrt{0.7},\sqrt{0.3})$ (for panel (a)),
		and $\mathbf{p}= (0.6,0.4)$, $\boldsymbol{\eta}=(\sqrt{0.6},\sqrt{0.4})$ (for panel (b)).
		Panels (c) and (d):  empirical distributions of  $\bar{r}_t$   under sub-optimal betting strategies
		for the same probability setting of (a) (i.e. $\mathbf{p}= (0.7,0.3)$) but assuming 
		$\boldsymbol{\eta}=(\sqrt{0.6},\sqrt{0.4})$ (panel (c)) and 	$\boldsymbol{\eta}=(\sqrt{0.7},\sqrt{0.3})$ (panel (d)).	 As in Fig.~\ref{fig:SQUEEZED},  the data were obtained over samples of $10.000$ simulations.
		The red lines in the plots represent  the empirical average of the sample while the cyan dots are the mean field approximation $\tilde r_t$ of Eq.~(\ref{DEFERREFALSO}). 
		The insets show		the asymptotic histograms of the associated values of $\bar{r}_t$  sampled at $t=100$ (panel (a) and (b)), and $t=150$ (panels~(c) and (d)).
		It should be stressed in all cases  one has  
		$\bar{\mu}_t =1$ for all $t$ due to Eq.~(\ref{COHERENTSTATES}). }
	\label{fig:coherent}
\end{figure*}

\section{Conclusions and future perspectives}\label{sec:Conclusions}
\noindent Betting (or gambling) is a practical tool for studying decision-making in face of classical uncertainty. The optimal betting strategy has been in the mathematical literature since the 1950s. Known as the classical Kelly criterion for optimal betting, it holds that under ``fair" odds and \textit{i.i.d.} repeated horse-races you should bet a fraction of capital on each horse that is proportional to the true winning probabilities of the latter in order to maximize the asymptotic growth rate of the cumulative wealth.\\
\indent In this paper, we presented a semi-classical model describing betting scenarios where the payoff of the gambler is encoded into the internal degrees of freedom of a quantum memory element. We provided a translation of the classical Kelly framework  into the new semi-classical playing field. Specifically, the invested capital was associated with the ergotropy of a single mode of the electromagnetic radiation;   the losses and winning events with the attenuation and amplification processes of the just-mentioned single-mode; instead, the (random) evolution of the capital, represented by the evolution of the quantum memory, is characterized within the theoretical setting of Bosonic Gaussian channels. As in the classical Kelly Criterion for optimal betting, we defined the  asymptotic doubling rate of the model and identified the optimal gambling strategy for fixed odds and winning probabilities. We carried out an accurate statistical and numerical analysis  to determine the performance of the model.
We found  that if the input capital state belongs to the set of Gaussian density matrices then the best option for the gambler is to devote all their initial resources into coherent state amplitude.\\
\indent A first natural problem to address is the generalization of our constructions to multimode scenarios. This would  pave the way for 
the  study of non-classical interference effects in betting games where the gambler can distribute their capital on parallel independent horse-races. In particular, this happens quite often in reality \cite{williams2005information, maclean2011kelly} and the comparison across different racetrack markets is fundamental in analysing their informational efficiency. Moreover, since the late nineties, platforms allowing short as well as long bets have been introduced (betting exchanges). This allows bettors to wager against over-priced horses just as hedge fund managers can short financial contracts.  Finding an adequate quantum setting for betting exchanges, including possibly general financial markets, would considerably extend the 
set of applications of our framework. 
Finally an interesting perspective is also offered by the possibility of employing the random
CPT trajectory model introduced in Sec.~\ref{sec:theProtocol} to analyse the lasing effects in random materials~\cite{Wiersma2008}.
\newline

The Authors would like to thank Giuseppe La Rocca for insightful discussions regarding the possible applications of the formalism to study  random lasing events. V. G. acknowledges support by MIUR via PRIN 2017 (Progetto di Ricerca di Interesse Nazionale): project QUSHIP (2017SRNBRK).
The authors acknowledge the financial support of UniCredit Bank R$\&$D group through the \emph{Dynamics and Information Research Institute} at the Scuola Normale Superiore.  \\

\bibliographystyle{unsrtnat}
\bibliography{kelly_article}

\newpage
\newpage

\begin{appendix}
\begin{lemma}[Recurrence relation for $\bar{\alpha}_t$]\label{lem:recursive}
The  recurrence relation~(\ref{DISTRALPHA}) 
admits the solution 
\begin{equation}\label{eq:recursivesolutionalpha}
\bar{\alpha}_t 
= \bar{g}^2_t \sum_{\ell = 1}^{t} \frac{\alpha_{j_\ell}}{\bar{g}_\ell^2}.
\end{equation}
\end{lemma}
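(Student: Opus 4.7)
The plan is to prove the closed-form expression by straightforward induction on $t$, since the recurrence is a first-order linear (inhomogeneous) recursion with time-varying coefficients, which is exactly the setting where the telescoping/summation-factor trick produces a closed form.

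First I would verify the base case $t=1$. By definition $\bar{g}_1 = g_{j_1}$, so the right-hand side of Eq.~(\ref{eq:recursivesolutionalpha}) reads $g_{j_1}^{2}\cdot (\alpha_{j_1}/g_{j_1}^{2}) = \alpha_{j_1}$, matching the initial condition in Eq.~(\ref{DISTRALPHA}).

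Next I would carry out the inductive step. Assume the formula holds for $t-1$, i.e.\ $\bar{\alpha}_{t-1} = \bar{g}_{t-1}^{2}\sum_{\ell=1}^{t-1}\alpha_{j_\ell}/\bar{g}_\ell^{2}$. Substituting into the recurrence $\bar{\alpha}_t = g_{j_t}^{2}\bar{\alpha}_{t-1} + \alpha_{j_t}$ and using the multiplicative identity $\bar{g}_t^{2} = g_{j_t}^{2}\bar{g}_{t-1}^{2}$ (immediate from Eq.~(\ref{DEFBARGT})), the first term becomes $\bar{g}_t^{2}\sum_{\ell=1}^{t-1}\alpha_{j_\ell}/\bar{g}_\ell^{2}$. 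The remaining $\alpha_{j_t}$ can then be written as $\bar{g}_t^{2}\cdot(\alpha_{j_t}/\bar{g}_t^{2})$, which precisely absorbs into the sum as the $\ell=t$ term, yielding the claimed expression.

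No real obstacle is expected: the only thing to keep tidy is the bookkeeping between $\bar{g}_{t-1}$ and $\bar{g}_t$ and making sure the extra inhomogeneous term is rewritten so as to merge with the telescoped sum. Since the entire argument hinges on the multiplicativity $\bar{g}_t = g_{j_t}\bar{g}_{t-1}$, there is nothing probabilistic here — the statement is a purely algebraic identity that holds trajectory by trajectory, so no almost-sure qualifiers or measurability considerations enter.
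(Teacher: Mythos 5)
Your proof is correct, and it is essentially the same argument as the paper's: the paper divides the recurrence by $\bar{g}_t^{2}$ and telescopes the resulting differences $\bar{\gamma}_{t+1}-\bar{\gamma}_t = \alpha_{j_{t+1}}/\bar{g}_{t+1}^{2}$, which is just your induction written as an explicit sum. Both hinge on the same multiplicativity $\bar{g}_t^{2}=g_{j_t}^{2}\bar{g}_{t-1}^{2}$, and your observation that the identity is purely algebraic and holds trajectory by trajectory is also implicit in the paper's treatment.
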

\begin{proof} The parameters $\bar{\alpha}_t$ and $\bar{g}_{t}$ define the action of the BGC~(\ref{COMPO33}). By the equation \eqref{DISTRALPHA} we have: 
\begin{equation} \label{RECURRENCE_ALPHA}
\bar\alpha_{t+1} = g_{j_{t+1}}^2\bar\alpha_t + \alpha_{j_{t+1}}.
\end{equation} 
Now using $\bar\gamma_t$ as defined in \eqref{DEFBARGT}, we can find an explicit formula for $\bar\alpha_t$:
\begin{equation}
\bar\gamma_{t+1}-\bar\gamma_t = \frac{\bar{\alpha}_{t+1}}{\bar g_{t+1}^2}-\frac{\bar{\alpha}_{t}}{\bar g_{t}^2}=
\frac{\alpha_{j_{t+1}}}{\bar g_{t+1}^2}.
\nonumber
\end{equation}
\noindent So we can solve for $\bar\gamma_t$, integrating the finite difference:
\begin{equation}
\sum_{\ell=1}^{t-1} (\bar\gamma_{\ell+1} - \bar\gamma_{\ell}) = \bar\gamma_t - \bar\gamma_1 = \sum_{\ell=1}^{t-1} \frac{\alpha_{j_{\ell+1}}}{\overline{g}_{\ell+1}^2},
\end{equation}
since we have that $\bar\gamma_1 = \frac{\alpha_{j_{1}}}{\bar g^2_1}$ we obtain:
\begin{equation}
\bar\gamma_t = \sum_{\ell=1}^t \frac{\alpha_{j_{\ell}}}{\bar g_{\ell}^2},
\end{equation}
and recalling the definition \ref{eq:ell} of $\bar\gamma_t$ we obtain the thesis.
\end{proof}

We conclude this part by deriving some properties of $\bar\gamma_t$.

\begin{lemma}[Moments of $\bar\gamma_t$]\label{lem:momenta}
If we call 
$$\mathbb{E}\left[\frac{1}{g^2}\right] := \sum_{j=1}^J\frac{p_j}{g_j^2}\;, \qquad \mathbb{E}\left[\frac{1}{g^4}\right] := \sum_{j=1}^J\frac{p_j}{g_j^4}\;,$$
$$\mathbb{E}\left[\frac{\alpha}{g^2}\right] := \sum_{j=1}^Jp_j\frac{\alpha_j}{g_j^2} \;, \qquad \mathbb{E}\left[\frac{\alpha^2}{g^4}\right] := \sum_{j=1}^Jp_j\frac{\alpha_j^2}{g_j^4},$$
then
\begin{equation}\label{EXP1} 
\mathbb{E}\left[\bar\gamma_t\right] = \mathbb{E}\left[\frac{\alpha}{g^2}\right]\frac{1-\mathbb{E}\left[1/g^2\right]^{t}}{1- \mathbb{E}\left[1/g^2\right]}\;.
\end{equation}
and 
\begin{equation}
\begin{split}
&\mathbb{E}\left[\bar\gamma_t^2\right] =  \mathbb{E}\left[\frac{\alpha^2}{g^4}\right]\frac{1-\mathbb{E}\left[1/g^4\right]^t}{1-\mathbb{E}\left[1/g^4\right]} - \frac{1}{\mathbb{E}[1/g^2]-\mathbb{E}\left[1/g^4\right]}\\
&\times\left(\frac{1-\mathbb{E}\left[1/g^2\right]^t}{1-\mathbb{E}\left[1/g^2\right]}-\frac{1-\mathbb{E}\left[1/g^4\right]^t}{1-\mathbb{E}\left[1/g^4\right]}\right) \;.
\end{split}
\end{equation}
\end{lemma}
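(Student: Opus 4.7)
The plan is to work directly from the explicit representation established in Lemma~\ref{lem:recursive}, namely $\bar\gamma_t = \sum_{\ell=1}^t \alpha_{j_\ell}/\bar g_\ell^2$. Because $\bar g_\ell^2 = \prod_{k=1}^\ell g_{j_k}^2$ and the horses $j_1,\dots,j_t$ are independent draws, I would rewrite each summand as
$$\frac{\alpha_{j_\ell}}{\bar g_\ell^2} = \frac{\alpha_{j_\ell}}{g_{j_\ell}^2}\prod_{k=1}^{\ell-1}\frac{1}{g_{j_k}^2},$$
so that the factors decouple across mutually independent indices. This is the structural observation on which everything else hinges.

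For the first moment, linearity and independence immediately give
$$\mathbb{E}[\bar\gamma_t] = \sum_{\ell=1}^t \mathbb{E}\!\left[\frac{\alpha}{g^2}\right]\mathbb{E}\!\left[\frac{1}{g^2}\right]^{\ell-1},$$
a single geometric series in $\mathbb{E}[1/g^2]$ that collapses to the closed-form expression~(\ref{EXP1}).

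For the second moment I would expand $\bar\gamma_t^2 = \sum_{\ell,m=1}^t \alpha_{j_\ell}\alpha_{j_m}/(\bar g_\ell^2\bar g_m^2)$ and split the double sum into a diagonal piece ($\ell=m$) and an off-diagonal piece. The diagonal collapses straightforwardly to $\mathbb{E}[\alpha^2/g^4]\sum_{\ell=1}^t \mathbb{E}[1/g^4]^{\ell-1}$, which produces the first explicit term of the claim. For the off-diagonal piece I would use the symmetry $\ell\leftrightarrow m$ to restrict to $\ell<m$ (with a factor of $2$) and exploit the key identity
$$\bar g_\ell^2\,\bar g_m^2 = \prod_{k=1}^{\ell} g_{j_k}^4 \cdot \prod_{k=\ell+1}^{m} g_{j_k}^2,$$
which partitions the indices into three independent blocks: indices $k<\ell$ each contributing $\mathbb{E}[1/g^4]$, the index $k=\ell$ contributing $\mathbb{E}[\alpha/g^4]$, indices $\ell<k<m$ contributing $\mathbb{E}[1/g^2]$, and $k=m$ contributing $\mathbb{E}[\alpha/g^2]$. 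The expectation of the summand then factorizes, and summing first over $m$ (a geometric series in $A:=\mathbb{E}[1/g^2]$) and then over $\ell$ (a combination of geometric series in $A$ and $B:=\mathbb{E}[1/g^4]$) yields the second piece of the claim after invoking the standard telescoping identity
$$\sum_{\ell=1}^{t-1} B^{\ell-1}(1-A^{t-\ell}) = \frac{1}{A-B}\!\left[(1-A)\,\frac{1-B^t}{1-B} - (1-B)\,\frac{1-A^t}{1-A}\right],$$
which is precisely the algebraic source of the $1/(\mathbb{E}[1/g^2]-\mathbb{E}[1/g^4])$ prefactor appearing in the statement.

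The main obstacle will be the careful bookkeeping in the off-diagonal term: the exponents of the $g_{j_k}$ differ between $k\le\ell$ and $\ell<k\le m$, and one must keep the single-index contributions at $k=\ell$ and $k=m$ separate from the product blocks so that independence can actually be invoked. Once this combinatorial partition is clean, the remainder is a routine manipulation of two nested geometric series into the claimed difference form. I would also take care to treat the degenerate cases $\mathbb{E}[1/g^2]=1$, $\mathbb{E}[1/g^4]=1$ or $\mathbb{E}[1/g^2]=\mathbb{E}[1/g^4]$ by taking limits in the closed formula, since these are exactly the borderline regimes (fair odds and optimal Kelly strategy) highlighted in Sec.~\ref{sec:Numerics}.
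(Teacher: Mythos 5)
Your strategy is exactly the paper's: write $\bar\gamma_t=\sum_{\ell}\alpha_{j_\ell}/\bar g_\ell^2$ via Lemma~\ref{lem:recursive}, factor each summand over the mutually independent indices, and reduce both moments to geometric series (with the diagonal/off-diagonal split for the second moment). Two points of caution, both in the off-diagonal bookkeeping. First, the ``standard telescoping identity'' you quote is not correct as written: at $t=2$ its left-hand side equals $1-A$ while its right-hand side equals $-2$ (here $A:=\mathbb{E}[1/g^2]$, $B:=\mathbb{E}[1/g^4]$). The identity you actually need is
$$\sum_{\ell=1}^{t-1}B^{\ell-1}\,\frac{1-A^{t-\ell}}{1-A}\;=\;\frac{1}{A-B}\left(\frac{1-A^{t}}{1-A}-\frac{1-B^{t}}{1-B}\right),$$
which is what produces the $1/(A-B)$ prefactor. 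Second, once the three-block factorization is carried out correctly, the off-diagonal contribution equals $2\,\mathbb{E}\!\left[\alpha/g^4\right]\mathbb{E}\!\left[\alpha/g^2\right]$ times the right-hand side above, and it is manifestly positive (it is the expectation of a sum of nonnegative terms); the printed statement instead has this block entering with coefficient $-1$, a discrepancy that originates in the paper's own derivation, where the prefactor $2\,\mathbb{E}[\alpha/g^4]\,\mathbb{E}[\alpha/g^2]$ silently disappears between two consecutive lines of the computation of $(II)$. So do not expect your (correctly executed) calculation to reproduce the displayed second-moment formula verbatim; the first-moment formula and the diagonal term are fine. Your closing remark about handling the degenerate cases $A=1$, $B=1$, $A=B$ by taking limits is a sensible addition that the paper omits.
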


\begin{proof}
The expected value of $\bar\gamma_t$ is:
\begin{eqnarray}
\mathbb{E}[\bar\gamma_t] &=& \mathbb{E}\left[\frac{\bar{\alpha}_t}{\bar{g}^2_t}\right] = \sum_{\ell=1}^t \mathbb{E}\Bigg[\frac{\alpha}{g^2}\Bigg]\mathbb{E}\Bigg[\frac{1}{\bar g_{\ell-1}^2} \Bigg]\\
&=&\mathbb{E}\left[\frac{\alpha}{g^2}\right]\sum_{\ell = 1}^{t}\mathbb{E}\left[\frac{1}{g^2}\right]^{\ell-1} = \mathbb{E}\left[\frac{\alpha}{g^2}\right]\frac{1-\mathbb{E}\left[1/g^2\right]^{t}}{1- \mathbb{E}\left[1/g^2\right]},
\nonumber
\end{eqnarray}
where we used the fact that the different bets are independent.

Regarding the second moment of $\bar\gamma_t$, we split it into two terms as:
\begin{eqnarray}
\mathbb{E}\left[\bar\gamma_t^2\right] &=& \mathbb{E}\left[\left(\sum_{\ell = 1}^{t} \frac{\alpha_{j_\ell}}{g^2_{j_\ell}}\right)^2\right]\\
&=& \mathbb{E}\left[\sum_{\ell = 1}^{t}\sum_{m = 1}^t \frac{\alpha_{j_{\ell}}}{\bar{g}_\ell^2}\frac{\alpha_{j_m}}{\bar{g}^2_m}\right]\nonumber\\
&=&\mathbb{E}\left[\frac{\alpha^2}{g^4}\right]\frac{1-\mathbb{E}\left[1/g^4\right]^t}{1-\mathbb{E}\left[1/g^4\right]} + 2 \mathbb{E}\left[\sum_{m < \ell} \frac{\alpha_{j_\ell}}{\bar{g}^2_\ell}\frac{\alpha_{j_m}}{\bar{g}^2_m}\right]\nonumber\\
&:= & (I)+(II)\nonumber\;, 
\end{eqnarray}
\noindent where the second term can also be written as:
\begin{eqnarray}
(II) &=&2 \sum_{\ell = 2}^{t} \sum_{m = 1}^{\ell-1}\mathbb{E}\left[\alpha_{j_\ell} \alpha_{j_m} \left(\prod_{i = m + 1}^{\ell} g_{j_i}^2 \prod_{s = 1}^{m} g_{j_s}^4 \right)^{-1}\right]\nonumber\\
&=& 2 \mathbb{E}\left[\frac{\alpha}{g^4}\right] \mathbb{E}\left[\frac{\alpha}{g^2}\right]\sum_{\ell = 2}^{t}\sum_{m = 1}^{\ell-1}\mathbb{E}\left[\frac{1}{g^2}\right]^{\ell-m-1}\mathbb{E}\left[\frac{1}{g^4}\right]^{m-1}\nonumber\\
&=& 2 \mathbb{E}\left[\frac{\alpha_1}{g^4}\right] \mathbb{E}\left[\frac{\alpha_1}{g^2}\right]\sum_{\ell = 2}^{t}\mathbb{E}\left[\frac{1}{g^2}\right]^{\ell-2}\sum_{m = 1}^{\ell-1}\left(\frac{\mathbb{E}\left[1/g^4\right]}{\mathbb{E}\left[1/g^2\right]}\right)^{m-1}\nonumber\\
&=&\frac{1}{1-\mathbb{E}\left[1/g^4\right]/\mathbb{E}\left[1/g^2\right]}\nonumber\\
&&\times \sum_{\ell = 2}^{t}\mathbb{E}\left[\frac{1}{g^2}\right]^{\ell-2}\left(1-\left(\frac{\mathbb{E}\left[1/g^4\right]}{\mathbb{E}\left[1/g^2\right]}\right)^{\ell-1}\right)\nonumber\\
&=&\mathbb{E}\left[\frac{\alpha^2}{g^4}\right]\frac{1-\mathbb{E}\left[1/g^4\right]^t}{1-\mathbb{E}\left[1/g^4\right]} - \frac{1}{\mathbb{E}[1/g^2]-\mathbb{E}\left[1/g^4\right]}\nonumber\\
&&\times \left(\frac{1-\mathbb{E}\left[1/g^2\right]^t}{1-\mathbb{E}\left[1/g^2\right]}-\frac{1-\mathbb{E}\left[1/g^4\right]^t}{1-\mathbb{E}\left[1/g^4\right]}\right)\;.
\end{eqnarray}
\end{proof}

\begin{corollary} \label{cor:convmom}
Under the assumptions that $\mathbb{E}\left[\frac{1}{g^2}\right], \: \mathbb{E}\left[\frac{1}{g^4}\right] < 1$ and $\mathbb{E}\left[\frac{1}{g^2}\right]-\mathbb{E}\left[\frac{1}{g^4}\right] \neq 0$ it results that: \\
\begin{eqnarray} \label{COROLL}
&& \mathbb{E}\left[\lim_{t \rightarrow \infty}\bar\gamma_t\right] = \mathbb{E}\left[\frac{\alpha}{g^2}\right]\frac{1}{1- \mathbb{E}\left[1/g^2\right]} \;, \nonumber \\
&& \mathbb{E}\left[\lim_{t\rightarrow\infty}\bar\gamma_t^2\right] = \mathbb{E}\left[\frac{\alpha^2}{g^4}\right]\frac{1}{1-\mathbb{E}\left[1/g^4\right]} - \frac{1}{\mathbb{E}[1/g^2]-\mathbb{E}\left[1/g^4\right]} \nonumber \\
&&\times\left(\frac{1}{1-\mathbb{E}\left[1/g^2\right]}-\frac{1}{1-\mathbb{E}\left[1/g^4\right]}\right) \;. 
\end{eqnarray}
\end{corollary}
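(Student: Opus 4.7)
The plan is to deduce the corollary from Lemma~\ref{lem:momenta} by applying the Monotone Convergence Theorem (MCT). The finite-$t$ formulas already have the correct functional form; all that remains is to justify passing the limit through the expectation and to check that the geometric remainders vanish under the stated hypotheses.

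First I would establish monotonicity. The explicit representation $\bar\gamma_t = \sum_{\ell=1}^{t}\alpha_{j_\ell}/\bar g_\ell^2$ from Lemma~\ref{lem:recursive} exhibits $\bar\gamma_t$ as a partial sum of non-negative terms: the noise constants $\alpha_{j_\ell}$ are non-negative by~\eqref{defALPHA}, and each product $\bar g_\ell^2$ is strictly positive by construction. Consequently, along every realization of the betting process, $(\bar\gamma_t)_{t\geq 1}$ is non-decreasing, so the pointwise almost-sure limit $\bar\gamma_\infty := \lim_{t\to\infty}\bar\gamma_t$ exists in $[0,\infty]$. Since squaring is monotone on $[0,\infty)$, the sequence $(\bar\gamma_t^2)_{t\geq 1}$ inherits the non-decreasing property and converges pointwise to $\bar\gamma_\infty^2$. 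MCT then yields
\begin{equation*}
\mathbb{E}[\bar\gamma_\infty] = \lim_{t\to\infty}\mathbb{E}[\bar\gamma_t], \qquad \mathbb{E}[\bar\gamma_\infty^2] = \lim_{t\to\infty}\mathbb{E}[\bar\gamma_t^2].
\end{equation*}

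The problem now reduces to taking $t\to\infty$ in the closed-form expressions of Lemma~\ref{lem:momenta}. The hypothesis $\mathbb{E}[1/g^2]<1$ forces $\mathbb{E}[1/g^2]^t \to 0$, and analogously $\mathbb{E}[1/g^4]<1$ gives $\mathbb{E}[1/g^4]^t\to 0$, so each geometric progression $(1-\mathbb{E}[1/g^{2k}]^t)/(1-\mathbb{E}[1/g^{2k}])$ converges to $1/(1-\mathbb{E}[1/g^{2k}])$. Assembling the pieces reproduces precisely~\eqref{COROLL}. The side condition $\mathbb{E}[1/g^2]\neq \mathbb{E}[1/g^4]$ is needed only to keep the prefactor $1/(\mathbb{E}[1/g^2]-\mathbb{E}[1/g^4])$ of the second-moment formula finite; by Cauchy-Schwarz one has $\mathbb{E}[1/g^4]\geq \mathbb{E}[1/g^2]^2$, so this hypothesis merely excludes the degenerate case in which $1/g^2$ is almost surely constant (a case which would require a separate l'H\^opital-style limit and yields a different, but non-singular, closed form).

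The only conceptual subtlety is the interchange of limit and expectation, and MCT handles this cleanly thanks to the monotonicity built into the series representation of $\bar\gamma_t$. The remaining manipulations are purely algebraic bookkeeping on top of Lemma~\ref{lem:momenta}, so I do not foresee any genuine obstacle.
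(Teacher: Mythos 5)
Your proposal is correct and follows essentially the same route as the paper: monotonicity of $\bar\gamma_t$ (and hence of its powers) from the non-negative series representation, the Monotone Convergence Theorem to interchange limit and expectation, and then the $t\to\infty$ limit of the closed-form expressions of Lemma~\ref{lem:momenta} under the geometric-decay hypotheses. The only (immaterial) quibble is your aside that $\mathbb{E}[1/g^2]=\mathbb{E}[1/g^4]$ forces $1/g^2$ to be almost surely constant --- Cauchy--Schwarz gives $\mathbb{E}[1/g^4]\geq\mathbb{E}[1/g^2]^2$, not $\geq\mathbb{E}[1/g^2]$, so the equality can occur for non-constant $1/g^2$; but since the non-degeneracy is a stated hypothesis this does not affect the proof.
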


\begin{proof}
By definition $\forall$$c \in \mathbb{R}^+$ $\{\bar\gamma_t^{c}\}$ is a positive and monotonically increasing sequence, so for the monotone convergence theorem we have that:
$$\lim_{t\rightarrow\infty}\mathbb{E}\left[\bar\gamma_t^{c}\right] = \mathbb{E}\left[\lim_{t\rightarrow\infty}\bar\gamma_t^{c}\right].$$
If we take the limit of the formulas in lemma \eqref{lem:momenta} we obtain the thesis
\end{proof}

\begin{theorem} \label{convlt}
If $\bar{g}_t^2$ diverges exponentially almost surely  then $\bar{\gamma}_t$ converges almost surely. 
\end{theorem}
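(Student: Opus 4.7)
The plan is to exploit two simple structural features of the series defining $\bar\gamma_t$ in Eq.~(\ref{eq:ell}): its terms are non-negative, and under the hypothesis its denominators grow exponentially. Together these force termwise absolute convergence by comparison with a geometric series.

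First, I would observe that the sequence $\bar\gamma_t = \sum_{\ell=1}^t \alpha_{j_\ell}/\bar g_\ell^2$ is monotone non-decreasing in $t$, because $\alpha_{j_\ell}\ge 0$ by definition~(\ref{defALPHA}) (the channels in play are attenuations and amplifications with $N=0$, for which $\alpha_j\ge 0$). Hence $\bar\gamma_t$ converges along any sample path if and only if it is bounded along that path, and we only need to produce an almost sure upper bound. Second, the random variables $\alpha_{j_\ell}$ take values in the finite set $\{\alpha_1,\dots,\alpha_J\}$, so they are deterministically bounded above by $\alpha_{\max}:=\max_{j}\alpha_j<\infty$.

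Next, I would unpack the hypothesis ``$\bar g_t^2$ diverges exponentially almost surely'' as the existence of a $\lambda>1$ and, on a set $\Omega_0$ of full probability, of a (possibly sample-dependent) constant $C(\omega)>0$ and index $N(\omega)\in\mathbb{N}$ such that
\begin{equation}
\bar g_\ell^2(\omega) \;\ge\; C(\omega)\,\lambda^{\ell} \qquad \text{for all } \ell\ge N(\omega).
\end{equation}
Combining the two ingredients gives, for $\omega\in\Omega_0$,
\begin{equation}
\bar\gamma_t(\omega) \;\le\; \sum_{\ell=1}^{N(\omega)-1}\frac{\alpha_{j_\ell}}{\bar g_\ell^2(\omega)} \;+\; \frac{\alpha_{\max}}{C(\omega)}\sum_{\ell=N(\omega)}^{t}\lambda^{-\ell},
\end{equation}
and the right-hand side is uniformly bounded in $t$ by a finite (sample-dependent) constant, since the tail is dominated by the convergent geometric series $\sum_{\ell\ge 1}\lambda^{-\ell}$. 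By monotone convergence, $\bar\gamma_t(\omega)$ admits a finite limit for every $\omega\in\Omega_0$, which is the claimed almost sure convergence.

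The only nontrivial conceptual step is formalising what ``diverges exponentially almost surely'' means; the natural reading is the one above, and it is in fact delivered for free when $G^*(\mathbf k,\mathbf p)>0$ via Eq.~(\ref{eq:sllndd}): choosing any $0<\epsilon<G^*$ and $\lambda=2^{2(G^*-\epsilon)}>1$, the strong LLN guarantees that eventually $\bar g_t^2\ge \lambda^{t}$ on a set of probability one. Once this identification is made, the rest is the comparison argument above; no delicate measure-theoretic work is required beyond invoking monotone convergence on each sample path.
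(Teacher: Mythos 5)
Your proof is correct and follows essentially the same route as the paper's: both interpret the almost sure exponential divergence of $\bar g_t^2$ as a pointwise lower bound $\bar g_\ell^2(\omega)\gtrsim \lambda^\ell$ on a full-measure set (obtained from the strong LLN statement~(\ref{eq:sllndd})) and then dominate the tail of $\bar\gamma_t$ by a convergent geometric series. The only cosmetic difference is that you conclude via monotone-plus-bounded while the paper phrases the same estimate as a Cauchy criterion; the substance is identical.
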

\begin{proof}
We just need to prove that the sequence $\{\bar{\gamma}_t\}_{t \in \mathbb{N}}$ is a Cauchy sequence almost everywhere in our probability space.
To formally embed the statement in a rigorous mathematical framework, we should introduce  a probability space $(\Omega, \Sigma, \mathcal{P})$ where   $\mathcal{P}$ the Bernoulli distribution on $\left\{1, \ldots, J\right\}$, 
$\Omega :=  \left\{1,\ldots,J\right\}^{\mathbb{N}}$ and 
$\Sigma$ to be the $\sigma$-algebra generated by the cylinder sets.
Then taking $\omega \in \Omega$ 
we identify $\bar{g}_{t}(\omega)$ with the 
define quantity (\ref{DEFBARGT} 
) computed on the first $t$ elements of $\omega$, and define $\bar G_t(\omega):=\frac{{\log_2} \bar{g}_{t}(\omega)
}{t}$. In this language Eq.~(\ref{eq:sllndd})  formally translates into  
\begin{equation}
\Pr[\omega \in \Omega : \forall \varepsilon > 0 \: \exists \, \bar{t} \; s.t. \; \forall \, t\geq \bar{t}\, \: |\bar{G}_t(\omega) - G| < \varepsilon] = 1,
\end{equation}
where $G$ is a short hand notation for $G(\boldsymbol{\eta}, \mathbf{k},  \mathbf{p})$.
Accordingly given 
\begin{equation}
\tilde{\Omega} := \{\omega \in \Omega : \forall \varepsilon > 0 \: \exists \, \bar{t} \; s.t. \; \forall \, t\geq \bar{t}\, \: |\bar{G}_t(\omega) - G| < \varepsilon\}\;, 
\label{eq:oneSet}
\end{equation}
it follows that 
\begin{eqnarray}
\Pr(\tilde{\Omega}):= \Pr(\omega \in \tilde{\Omega})=1\;. 
\end{eqnarray}  Now, let $\omega\in\tilde{\Omega}$. Fix $\varepsilon\in(0,\frac{G}{2})$, $\delta>0$ and let $\bar{t}(\omega,\varepsilon)$ as in  Eq.\eqref{eq:oneSet}. Then for all $t\geq\bar{t}(\omega,\varepsilon)$
\begin{equation}
2^{t(G-\varepsilon)} \leq \bar{g}_t(\omega) \leq 2^{t(G+\varepsilon)}.
\end{equation}
\noindent Let $\overline{m} > \bar{t}(\omega,\varepsilon)$, $p,q > \overline{m}$ with $p < q$. For a chosen $\omega$
\begin{eqnarray}
\bar{\gamma}_q(\omega) - \bar{\gamma}_p(\omega) & = & \sum_{\ell=p}^q\frac{\alpha_{j_\ell}(\omega)}{g_\ell^2(\omega)}\frac{1}{\bar{g}_{\ell-1}^2(\omega)} \nonumber \\
& \leq & K\sum_{\ell=p}^q 2^{-2(\ell-1)(G-\varepsilon)} \nonumber \\
& \leq & K'2^{-(\overline{m}-1)G},
\end{eqnarray}
\noindent for some constants $K,K' > 0$ -- we are using the convention of
indicating with $\bar{\gamma}_q(\omega)$ the quantity (\ref{eq:ell}) associated with the 
first $q$ elements of $\omega$.  For a sufficiently large $\overline{m}$ we have $K'2^{-(\overline{m}-1)G} < \delta \; \forall \delta > 0$. Hence we proved that 
\begin{eqnarray}
\Pr(\omega \in \Omega : && \forall \delta > 0 \: \exists \, \overline{m} \; s.t. \; \\ \nonumber 
&&\forall \, p,q\geq \overline{m} \;, |\bar{
\gamma}_p(\omega)-\bar{\gamma}_q(\omega)| < \delta)=1,
\end{eqnarray}
\noindent or equivalently that $\{\bar{\gamma}_t\}_{t \in \mathbb{N}}$ is a Cauchy sequence almost everywhere.
\end{proof}

\end{appendix}

\end{document}